\newcommand\settitle[2][]{%
 \title{#2}
 \ifthenelse{\equal{#1}{}}%
  {\fancyhead[RO]{\nouppercase #2 \qquad \thepage}}%
  {\fancyhead[RO]{\nouppercase #1 \qquad \thepage}}%
}
\newcommand\setauthors[2]{%
 \author{#2}
  {\fancyhead[LE]{\thepage \qquad \nouppercase #1}}%
}
\def\keywordsname{Keywords.}
\newenvironment{keywords}{%
      \list{}{\advance\topsep by-0.50cm\relax\small
     \leftmargin=1cm
      \labelwidth=1cm
     \listparindent=1cm
     \itemindent\listparindent
      \rightmargin\leftmargin}\item[\hskip\labelsep
                                    \bfseries\keywordsname]}
   \def\R{{\mathbb R}}
   \def\N{{\mathbb N}}
   \def\Z{{\mathbb Z}}
   \def\O{{\Omega}}
   \def\qed{\hfill$\diamond$}
   \def\({\langle}
   \def\){\rangle}
   \def\mb{\boldsymbol}
   \def\cB{{\mathcal B}}
   \def\cF{{\mathcal F}}
   \def\cG{{\mathcal G}}
   \def\cI{{\mathcal I}}
   \def\cP{{\mathcal P}}
   \def\1{\mb1}
   \def\v0{{\bf 0}}
\begin{document}

\settitle[The ergodic decomposition of asymptotically mean stationary random sources]
         {The ergodic decomposition of asymptotically mean stationary random sources}

\setauthors{A. Sch\"onhuth}
           {Alexander Sch\"onhuth, Member IEEE}

\institute{Pacific Institute for the Mathematical Sciences\\
           School of Computing Science\\
           Simon Fraser University\\
           8888 University Drive\\
           Burnaby, BC, V5A 1S6, Canada\\
\email{schoenhuth@cs.sfu.ca}}

\date{}
\maketitle

\thispagestyle{plain}
\begin{abstract}
It is demonstrated how to represent asymptotically mean stationary
(AMS) random sources with values in standard spaces as mixtures of
ergodic AMS sources. This an extension of the well known decomposition
of stationary sources which has facilitated the generalization of
prominent source coding theorems to arbitrary, not necessarily
ergodic, stationary sources. Asymptotic mean stationarity generalizes
the definition of stationarity and covers a much larger variety of
real-world examples of random sources of practical interest.  It is
sketched how to obtain source coding and related theorems for
arbitrary, not necessarily ergodic, AMS sources, based on the
presented ergodic decomposition.
\end{abstract}

\begin{keywords}
Asymptotic mean stationarity, ergodicity, ergodic decomposition, ergodic
theorem, source coding, stationarity.
\end{keywords}

\section{Introduction}
\label{sec:introduction}

The main purpose of this paper is to demonstrate how to decompose
asymptotically mean stationary (AMS) random sources into ergodic AMS
sources.  The issue was brought up in \cite{Gray84}, as it is involved
in a variety of aspects of substantial interest to information theory.
To the best of our knowledge, it had remained unsolved since then.\par

The ergodic decomposition of AMS sources can be viewed as an extension
of the ergodic decomposition of stationary sources which states that a
stationary source can be decomposed into ergodic components or, in
other words, that it is a mixture of stationary and ergodic sources.
This was originally discussed in more abstract measure theoretic
settings (see the subsequent remark~\ref{rem.statdecomp}).\par The
first result in information theory that builds on the idea of
decomposing a source into ergodic components was obtained by Jacobs in
1963.  He proved that the entropy rate of a stationary source is the
average of the rates of its ergodic components \cite{Jacobs63}.  In
1974, the ergodic decomposition of stationary sources was rigorously
introduced to the community by Gray and Davisson~\cite{Gray74} who
also provided an intuitive proof for sources with values in a discrete
alphabet.  This turned out to be a striking success as prominent
theorems from source coding theory and related fields could be
extended to arbitrary, not necessarily ergodic, stationary sources
\cite{Gray74a,Kieffer75,Neuhoff75,Pursley76,Leon79,Effros94} (see the
references therein as well as \cite{Gray90} for a complete list).\par
In general, these results underscore that ergodic and information
theory have traditionally been sources of mutual inspiration.\\ 

\begin{remark}
\label{rem.statdecomp}
The first variant of an ergodic decomposition of stationary sources
(with values in certain topological spaces) was elaborated in a
seminal paper by von Neumann~\cite{vonNeumann}.  Subsequently, Kryloff
and Bogoliouboff \cite{Bogoliouboff} obtained the result for compact
metric spaces.  and it was further extended by Halmos
\cite{Halmos41,Halmos49} to normal spaces.  In parallel,
Rokhlin~\cite{Rokhlin52} proved the decomposition theorem for Lebesgue
spaces, which still can be considered as one of the most general
results. Oxtoby \cite{Oxtoby52} further clarified the situation by
demonstrating that Kryloff's and Bogoliouboff's results can be
obtained as corollaries of Riesz' representation theorem. In ergodic
theory, the corresponding idea is now standard
\cite{Pollicott,Walters}.
\end{remark}

Asymptotic mean stationarity was first introduced in 1952 by Dowker
\cite{Dowker51} and further studied by Rechard \cite{Rechard56}, but
became an area of active research only in the early 1980s, thanks to a
fundamental paper of Gray and Kieffer~\cite{Gray80}. Asymptotic mean
stationarity is a property that applies for a large variety of natural
examples of sources of practical interest \cite{Gray80}. Reasons are:
\begin{enumerate}
\item Asymptotic mean stationarity is stable under conditioning (see
  \cite{Krengel}, p.~33) whereas stationarity is not.
\item To possess ergodic properties w.r.t. bounded
  measurements is equivalent to asymptotic mean stationarity
  \cite{Dowker51,Gray80}.  Note that Birkhoff's theorem
  (e.g.~\cite{Krengel}) states that stationarity is sufficient to
  possess ergodic properties.
\item The Shannon-McMillan-Breiman (SMB) theorem was iteratively
  extended to finally hold for AMS discrete random sources in 1980
  \cite{Gray80}.
\end{enumerate}
Note that an alternative, elegant proof of the SMB theorem can be
achieved by employing the ergodic decomposition of stationary
sources~\cite{Algoet88}.  The second point gives evidence of the
practical relevance of AMS sources, as to possess ergodic properties
is a necessity in a wide range of real-world applications of
stochastic processes. For example, asymptotic mean stationarity is
implicitly assumed when relative frequencies along sequences emitted
by a real-world process are to converge. See also
\cite{Kieffer81,Faigle07} for expositions of large classes of AMS
processes of practical interest.  The validity of the SMB theorem is
a further theoretical clue to the relevance of AMS sources in
information theory.\\ 

The benefits of an ergodic decomposition of AMS sources are, on one
hand, to arrange the theory of AMS sources and, on the other hand, to
facilitate follow-up results in source coding theory and related
fields (see the discussion section~\ref{sec.discussion} for some
immediate consequences). In \cite{Gray84}, one can find a concise
proof of the ergodic decomposition of stationary sources as well as
the ergodic decomposition of two-sided AMS sources, both with values
in standard spaces.  The case of two-sided AMS sources, however, is a
straightforward reduction to the stationary case which does not apply
for arbitrary AMS sources.  As the result for arbitrary AMS sources
would have been highly desirable, it was listed as an open question in
the discussion section of \cite{Gray84}.\par The main purpose of this
paper is to provide a proof of the ergodic decomposition of arbitrary
(two-sided and one-sided) AMS sources with values in standard spaces
which cover discrete-valued and all natural examples of topological
spaces.\\

The paper is organized as follows. In section~\ref{sec.results} we
collect basic notations and state the two main results. The first one
is the ergodic decomposition itself and the second one is an essential
lemma that may be interesting in its own right. In
section~\ref{sec.conv}, we present basic definitions of probability
and measure theory as well as a classical ergodic theorem
(\emph{Krengel's stochastic ergodic theorem}) required for our
purposes. The statement of Krengel's theorem is intuitively easy to
grasp and can be understood by means of basic definitions from
probability theory only.  In section~\ref{sec.amsconv} we give a proof
of lemma~\ref{l.amsconv}. Both the statement and the proof of
lemma~\ref{l.amsconv} are crucial for the proof of the
decomposition. In section~\ref{sec.standardcond}, we list relevant
basic properties of standard spaces (subsection~\ref{ssec.standard})
and regular conditional probabilities and conditional expectations
(subsection~\ref{ssec.regular}). Finally, in section~\ref{sec.decomp},
we present the proof of the ergodic decomposition. For organizational
convenience, we have subdivided it into three steps and collected the
merely technical passages into lemmata which have been deferred to the
appendices \ref{app.lemma} and \ref{app.proof2}. We conclude by
outlining immediate consequences of our result and pointing out
potential applications in source coding theory, in the discussion
section~\ref{sec.discussion}.

\section{Basic Notations and Statement of Results}
\label{sec.results}

Let $(\O,\cB)$ be a measurable space and $T:\O\to\O$ a measurable function. In
this setting (see \cite{Pollicott,Gray01}), a probability measure $P$ is
called {\em stationary} (relative to $T$), if
\begin{equation*}
P(B) = P(T^{-1}B)
\end{equation*}
for all $B\in\cB$. It is called {\em asymptotically mean stationary (AMS)}
(relative to $T$), if there is a measure $\bar{P}$ on $(\Omega,\mathcal{B})$
such that
\begin{equation}
\forall B\in\mathcal{B}:\quad
\lim_{n\to\infty}\frac{1}{n}\sum_{i=0}^{n-1}P(T^{-i}B) 
= \bar{P}(B).\label{eq.ams}
\end{equation}
Clearly, the measure $\bar{P}$ is stationary and it is therefore called the
{\em stationary mean} of $P$. An event $I\in\cB$ is called
\emph{invariant} (relative to $T$), if $T^{-1}I = I$.  The set of invariant
events is a sub-$\sigma$-algebra of $\cB$ which we will denote by $\cI$.  A
probability measure $P$ on $(\O,\cB)$ is said to be \emph{ergodic} (relative
to $T$), if $P(I)\in\{0,1\}$ for any such invariant $I\in\cI$.  Note that an
AMS system is ergodic if and only if its stationary mean is.\par In order to
apply this theory to {\it ($A$-valued) random sources}, that is, discrete-time
stochastic processes with values in a standard space $A$ (for a definition of
standard space see subsection~\ref{ssec.standard}), one sets
\begin{equation*}
\O = A^I = \bigotimes_{i\in I}A
\end{equation*}
where $I \in\{\N,\Z\}$. That is, $\O$ is the space of one-sided ($I=\N$) or two-sided
($I=\Z$) $A$-valued sequences. $\cB$ then is
set to be the $\sigma$-algebra generated by the cylinder sets of sequences. A
random source is given by a probability measure $P$ on
$(\O,\cB)$. Further, $T:\O\to\O$ is defined to be the {\it left shift
operator}, i.e.
\begin{equation*}
(Tx)_n = x_{n+1}
\end{equation*}
for $x=(x_0,x_1,...,x_n,...)\in\O$ (one-sided case) or $x =
(...,x_{-1},x_0,x_1,...)\in\O$ (two-sided case).\\

The main contribution of this paper is to give a proof of the
following theorem.

\begin{theorem}
\label{t.amsdecomp}
Let $P$ be a probability measure on a standard space $(\Omega,\cB)$ 
which is AMS relative to the measurable $T:\O\to\O$. 
Then there is a $T$-invariant set $E\in\cI$ with $P(E)=1$ such
that for each $\omega\in E$ there is an {\bf ergodic AMS}
probability measure $P_{\omega}$ and the following properties
apply:
\begin{enumerate}
\item[(a)] \begin{equation*}
           \forall B\in\cB : \quad P_{\omega}(B) = P_{T\omega}(B).
	   \end{equation*}
\item[(b)] 
      \begin{equation*}
      \forall B\in\cB: \quad P(B) = \int P_{\omega}(B)\,dP(\omega).
      \end{equation*} 
      
\item[(c)] If $f\in L_1(P)$, then also $\omega\mapsto\int f\,dP_{\omega}\in L_1(P)$ and 
      \begin{equation*}
      \int f\,dP(\omega) = \int (\int f\,dP_{\omega})\,dP(\omega).
      \end{equation*}
\end{enumerate}
\end{theorem}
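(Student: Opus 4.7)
The natural strategy is to combine the classical ergodic decomposition of the stationary mean $\bar P$ with the theory of regular conditional probabilities, which is available because $(\Omega,\cB)$ is standard. Concretely, I would apply the stationary ergodic decomposition (a classical result, available in \cite{Gray84}) to $\bar P$, producing a family $\{\bar P_{\omega}\}$ of ergodic stationary measures with $\bar P(B) = \int \bar P_{\omega}(B)\,d\bar P(\omega)$ and $\bar P_{\omega}=\bar P_{T\omega}$. The candidate family of AMS components would then be defined as a regular conditional probability of $P$ given the invariant $\sigma$-algebra $\cI$, i.e.\ $P_{\omega}(B) := E_{P}[\1_{B}\mid\cI](\omega)$. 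Standardness of $(\Omega,\cB)$ is exactly what is needed to promote this version of conditional expectation to a genuine probability measure $P_{\omega}$ on $\cB$.

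Granted this construction, properties (a)--(c) in the theorem are essentially bookkeeping. Since $\omega\mapsto P_{\omega}(B)$ is $\cI$-measurable it is constant along orbits, giving (a); the iterated-integral identities (b) and (c) are then literally the defining property of conditional expectation, extended from indicators to $L_{1}(P)$ by the usual monotone/linearity argument.

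The substance lies in the third step: showing each $P_{\omega}$ is AMS with stationary mean $\bar P_{\omega}$, and ergodic. For the AMS assertion one must analyze the Cesaro averages
\begin{equation*}
\frac{1}{n}\sum_{i=0}^{n-1} P_{\omega}(T^{-i}B)
= E_{P}\!\left[\frac{1}{n}\sum_{i=0}^{n-1}\1_{B}\circ T^{i}\,\Big|\,\cI\right](\omega),
\end{equation*}
and identify their pointwise limit $P$-a.s.\ as $\bar P_{\omega}(B)$. This is exactly the setting in which Krengel's stochastic ergodic theorem applies to AMS systems, delivering convergence of the inner Cesaro averages; Lemma~\ref{l.amsconv} presumably upgrades this to convergence after conditioning on $\cI$, which is the step that truly needs the AMS hypothesis (one cannot simply quote the stationary decomposition for $P$). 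Once AMS-ness is established with stationary mean $\bar P_{\omega}$, ergodicity of $P_{\omega}$ is free: an AMS measure is ergodic iff its stationary mean is, and $\bar P_{\omega}$ is ergodic by construction of the stationary decomposition.

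The main obstacle, and the reason the two-sided reduction in \cite{Gray84} does not suffice, is precisely this passage from Cesaro averages of $P(T^{-i}B)$ to Cesaro averages of the conditional probabilities $P_{\omega}(T^{-i}B)$. Without stationarity of $P$ itself one cannot intertwine $T$ with the conditional expectation, so one must route through $\bar P$, use the invariance of $\cI$, and control a non-negligible null-set issue to guarantee that the exceptional set can be chosen $T$-invariant with $P(E)=1$. That null-set bookkeeping, together with the identification of the limit as the \emph{same} $\bar P_{\omega}$ arising from the stationary decomposition of $\bar P$, is where I would expect Lemma~\ref{l.amsconv} and the regular-conditional machinery of Section~\ref{sec.standardcond} to carry the real weight.
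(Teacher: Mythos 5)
Your overall strategy coincides with the paper's: define $P_{\omega}$ as a regular conditional probability of $P$ given the invariant events $\cI$ (standardness supplying regularity), read off (a) from $\cI$-measurability, obtain (b) and (c) from the defining properties of conditional probability and expectation, and then reduce ergodicity to showing that each $P_{\omega}$ is AMS with stationary mean equal to the ergodic component $\bar{P}_{\omega}$ of $\bar{P}$. Up to that point your proposal is sound and matches subsection~\ref{ssec.strategy}.

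The gap is in the one step you yourself flag as ``the substance.'' Writing $\frac1n\sum_{i=0}^{n-1}P_{\omega}(T^{-i}B)=E_P\bigl[\frac1n\sum_{i=0}^{n-1}\1_B\circ T^i\mid\cI\bigr](\omega)$ and passing to a pointwise limit handles one $B$ at a time, producing a $P$-null exceptional set that depends on $B$. Intersecting over a countable generating field is not enough: asymptotic mean stationarity of $P_{\omega}$ demands convergence of the Cesaro averages for \emph{every} $B\in\cB$, and setwise convergence of measures on a generating field does not propagate to the generated $\sigma$-algebra. Lemma~\ref{l.amsconv} does not, as you presume, perform this upgrade ``after conditioning on $\cI$''; what the paper actually does is rerun the proof of Lemma~\ref{l.amsconv} fiber by fiber. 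Concretely, it constructs for each $\omega$ a dominating measure $Q_{\omega}$, which must be taken as a specific version of the regular conditional probability of the global dominating measure $Q$ given $\cI$ (Lemma~\ref{l.qversion}); it builds positive contractions $U_{\omega}$ on $L_1(Q_{\omega})$ and applies Krengel's theorem in each fiber to get stochastic convergence of the densities $f_{n,\omega}$ to a $U_{\omega}$-invariant $f^*_{\omega}$, which via Theorem~\ref{t.jacka} yields convergence of $P_{n,\omega}(B)$ simultaneously (indeed uniformly) over all $B\in\cB$; and it then identifies $f^*_{\omega}$ with $\bar{f}_{\omega}=d\bar{P}_{\omega}/dQ_{\omega}$ on an invariant set of full measure using the $\liminf$ formula (\ref{eq.krengelliminf}) together with Lemmata~\ref{l.liminfae} and \ref{l.version}, not by a direct interchange of limit and conditional expectation. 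None of this fiber-wise machinery, nor the null-set bookkeeping needed to make $E$ invariant, is present in your proposal beyond the assertion that it should exist, so the proof of the AMS and ergodicity claims --- the actual content of the theorem --- remains open.
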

\medbreak

Replacing AMS by stationary yields the aforementioned and well-known theorem
of the ergodic decomposition of stationary random sources
(e.~g.~\cite{Gray84}, th.~2.5).\\ 

The following lemma is a key observation for the proof of
theorem~\ref{t.amsdecomp} and may be interesting in its own right. It
states that the convergence involved in the definition of AMS measures
is uniform over the elements of $\cB$.  This may seem intuitively
surprising, as the underlying measurable space does not even have to
be standard.

\begin{lemma}
\label{l.amsconv}
Let $P$ be an AMS measure on $(\O,\cB)$ relative to $T$. 
Then 
\begin{equation*}
\sup_{B\in\cB}\;|\frac1n\sum_{i=0}^{n-1}P(T^{-i}B) - \bar{P}(B)| \underset{n\to\infty}{\longrightarrow} 0.
\end{equation*}
In other words, the convergence of (\ref{eq.ams}) is uniform
over the events $B\in\cB$.
\end{lemma}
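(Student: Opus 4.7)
My plan is to show $\|P_n-\bar P\|_{TV}\to 0$, which is equivalent to the stated uniform convergence since $\sup_{B\in\cB}|(P_n-\bar P)(B)|=\tfrac12\|P_n-\bar P\|_{TV}$ for probability measures of equal total mass. I would Lebesgue-decompose $P_n$ with respect to the stationary mean $\bar P$ and handle the absolutely continuous and singular pieces separately.

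For the singular piece, which uses only $\sigma$-additivity and AMS, I choose for each $i\ge 0$ a $\bar P$-null support $N_i\in\cB$ of the $\bar P$-singular part of $P\circ T^{-i}$ and set $N:=\bigcup_{i\ge 0}N_i$; being a countable union of $\bar P$-null sets, $\bar P(N)=0$. Then $(P_n)_s=\tfrac1n\sum_{i=0}^{n-1}(P\circ T^{-i})_s$ is supported on $N$, so
\[
\|(P_n)_s\|_{TV}=(P_n)_s(\Omega)\le\tfrac1n\sum_{i=0}^{n-1}P(T^{-i}N)\longrightarrow\bar P(N)=0
\]
by (\ref{eq.ams}) applied to the \emph{fixed} set $N$. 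It therefore suffices to prove $\|(P_n)_{ac}-\bar P\|_{TV}\to 0$, i.e.\ $\|h_n-\mathbf 1\|_{L^1(\bar P)}\to 0$ for $h_n:=d(P_n)_{ac}/d\bar P$. Since $h_n\ge 0$ and $\int h_n\,d\bar P=(P_n)_{ac}(\Omega)\to 1=\int\mathbf 1\,d\bar P$ (using the identity $P|_{\cI}=\bar P|_{\cI}$, itself immediate from AMS on $T$-invariant sets), Scheff\'{e}'s lemma reduces the $L^1$-claim to $\bar P$-almost-sure convergence $h_n\to\mathbf 1$.

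For the pointwise step I would decompose $P=P_{ac}+P_s$ relative to $\bar P$ and write $h_n=\tfrac1n\sum_{i=0}^{n-1}\hat T^i f_0+\tfrac1n\sum_{i=0}^{n-1}r_i$, where $\hat T$ is the Perron--Frobenius Markov operator of $T$ on $L^1(\bar P)$, $f_0=dP_{ac}/d\bar P$, and $r_i=d((P_s\circ T^{-i})_{ac})/d\bar P$ captures the transport of singular mass onto $\mathrm{supp}(\bar P)$. Krengel's stochastic ergodic theorem (Section~\ref{sec.conv}) applied to the Markov contraction $\hat T$ delivers $\bar P$-a.s.\ convergence of the first Cesàro average to $E_{\bar P}[f_0\mid\cI]$. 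The setwise limit of $\tfrac1n\sum r_i$, identifiable from the AMS limit of $\tfrac1n\sum P_s\circ T^{-i}$ after subtracting the $P_{ac}$-contribution, is $\mathbf 1-E_{\bar P}[f_0\mid\cI]$, so the two summands of $h_n$ add to $\mathbf 1$ in the appropriate sense.

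The main obstacle is upgrading the setwise/weak-$L^1$ convergence of the residual Cesàro average $\tfrac1n\sum r_i$ to $\bar P$-almost-sure (or at least $\bar P$-in-measure) convergence, because the $r_i$ are not literal iterates of a fixed operator applied to a fixed density---mass from the $\bar P$-singular part of $P$ is scrambled onto $\mathrm{supp}(\bar P)$ by $T$ in a way that does not factor through a single Markov operator on $L^1(\bar P)$. I expect to close this gap either by realising $\{r_i\}$ as iterates of a suitably enlarged Markov operator, via a skew-product coupling the propagation of singular mass with the Perron--Frobenius dynamics, to which Krengel's theorem applies; or else by bypassing the pointwise route entirely through a Dunford--Pettis weak-compactness argument on $\{h_n\}$, in which equi-integrability (from Vitali--Hahn--Saks) together with the matching of norms forces weak $L^1(\bar P)$-convergence to upgrade to strong convergence.
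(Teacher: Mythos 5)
Your treatment of the singular part is correct and pleasantly elementary: collecting $\bar P$-null supports of the singular parts of all the $P\circ T^{-i}$ into one fixed null set $N$ and invoking the AMS limit (\ref{eq.ams}) for that single set does give $\|(P_n)_s\|_{TV}\to 0$, and the identity $\sup_{B\in\cB}|(P_n-\bar P)(B)|=\tfrac12\|P_n-\bar P\|_{TV}$ correctly reduces the lemma to $\|h_n-\mathbf 1\|_{L^1(\bar P)}\to 0$. But all of the actual difficulty is concentrated in that remaining step, and neither of your proposed closures works. The Dunford--Pettis/Vitali--Hahn--Saks fallback is a dead end: it yields only that $h_n\to\mathbf 1$ \emph{weakly} in $L^1(\bar P)$, and weak convergence together with $\|h_n\|_1\to\|\mathbf 1\|_1$ does \emph{not} upgrade to norm convergence in $L^1$ ($L^1$ lacks the Radon--Riesz property). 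Concretely, $h_n(x)=1+\sin(2\pi nx)$ on $([0,1],\lambda)$ gives nonnegative probability densities with $\int_Bh_n\,d\lambda\to\lambda(B)$ for every Borel $B$ and $\|h_n\|_1=1$, yet $\|h_n-\mathbf 1\|_1=2/\pi$ for all $n$. This example is precisely why the lemma is not soft: setwise convergence of probability measures does not imply total-variation convergence in general, so any proof must exploit the Ces\`aro/contraction structure --- which your compactness argument discards.

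Your other suggested repair (realising the residual densities $r_i$ as iterates of an enlarged Markov operator) is left entirely speculative, and it is exactly here that the paper does something different and decisive: rather than enlarging the operator, it enlarges the \emph{reference measure}. Working with $Q:=\tfrac12(\bar P+\sum_{n\ge 0}2^{-n-1}P\circ T^{-n})$ instead of $\bar P$ makes every $P\circ T^{-i}$ and $\bar P$ absolutely continuous with respect to $Q$, so there is no singular mass to be transported at all; the map sending $f$ to the density of $\Phi(f)\circ T^{-1}$ is then a well-defined positive contraction $U$ on $L_1(Q)$ (this needs a short argument, special to this $Q$, that $Q(T^{-1}B)>0$ forces $Q(B)>0$), and the density of $P_n$ with respect to $Q$ is literally $\frac1n\sum_{t=0}^{n-1}U^tf_1$. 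Krengel's theorem then gives stochastic convergence of these averages to a $U$-invariant limit, which is identified with $d\bar P/dQ$ via the $L_1$-convergence on the conservative part, and theorem~\ref{t.jacka} converts stochastic convergence of densities with respect to a common dominating measure into exactly the claimed uniform (Skorokhod weak) convergence. Note also that even for the Perron--Frobenius half of your splitting you should only expect convergence in measure from Krengel's theorem, not $\bar P$-a.s.\ convergence; that mismatch is harmless (Scheff\'e along a.e.-convergent subsequences), but the gap in controlling $\frac1n\sum_ir_i$ is genuine and is not closed by your proposal.
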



\section{Preliminaries}
\label{sec.conv}

\subsection{Convergence of Measures}
\label{ss.conv}

\begin{definition}\label{d.measureconv}
Let $(P_n)_{n\in\mathbb{N}}$ be a sequence of probability measures
on a measurable space $(\Omega,\mathcal{B})$. 
\begin{itemize}
\item We say that the $P_n$ converge {\bf strongly} to a probability measure
  $\bar{P}$ if the sequences $(P_n(B))_{n\in\mathbb{N}}$ converge to
  $\bar{P}(B)$ for all $B\in\mathcal{B}$.
\item If this convergence happens to be uniform in $B\in\mathcal{B}$ we 
say that the $P_n$ converge {\bf Skorokhod weakly} to $\bar{P}$.
\end{itemize}
\end{definition}

See \cite{Jacka97} for history and detailed characterisations of these
definitions.  Obviously Skorokhod weak convergence implies strong
convergence. Seen from this perspective, lemma~\ref{l.amsconv} states that the
measures $P_n=1/n\sum_{t=0}^{n-1}P\circ T^{-t}$, where $P$ is an AMS measure
and $P\circ T^{-t}(B):=P(T^{-t}B)$ , do not only converge strongly (which they
do by definition), but also Skorokhod weakly to the stationary mean $\bar{P}$.\\

A helpful characterization of Skorokhod weak convergence is the
following theorem. Therefore we recall that a probability measure $Q$
is said to {\it dominate} another probability measure $P$ (written
$Q>>P$) if $Q(B)=0$ implies $P(B)=0$ for all $B\in\cB$. The theorem of
Radon-Nikodym (e.g.~\cite{Halmos}) states that in case of $Q>>P$ there
is a measurable function $f:\O\to\R$, called {\it Radon-Nikodym
  derivative} or simply {\it density}, written $f=\frac{dP}{dQ}$, such
that
\begin{equation*}
P(B) = \int_Bf\,dQ
\end{equation*}
for all $B\in\cB$. It holds that $P(f=g) = 1$ (hence $Q(f=g)=1$) for
two densities $f,g = \frac{dP}{dQ}$.\par 
As usual, 
\begin{equation*}
L_1(Q):=L_1(\O,\cB,Q)
\end{equation*}
denotes the (linear) space of $Q$-integrable functions on $(\O,\cB)$
modulo the subspace of functions that are null almost everywhere. For
technical convenience, we will sometimes identify elements of $f\in
L_1(Q)$ with their representatives $f:\Omega\to\R$. As a consequence
we have that $f=g$ in $L_1(Q)$ if and only if $Q(f=g) = 1$ for their
representatives. That is, equality is in an almost-everywhere sense
for the representatives. Therefore, in $L_1(Q)$, a density is
unique. Furthermore, $L_1(Q)$ can be equipped with the norm
\begin{equation*}
||f||_1 := \int_{\O} |f|\,dQ.
\end{equation*}
See standard textbooks (e.g.~\cite{Halmos}) for details.\\ 

In this language, Skorokhod weak convergence has a useful
characterisation.

\begin{theorem}[\cite{Jacka97}]
\label{t.jacka}
Let $(P_n)_{n\in\N},\bar{P}$ be probability measures.  Then the
following statements are equivalent:
\begin{enumerate}
\item[(i)] The $P_n$ converge Skorokhod weakly to $\bar{P}$.
\item[(ii)] There is a probability measure $Q$, which dominates $\bar{P}$ and
all of the $P_n$ such that the densities $f_n:=\frac{dP_n}{dQ}$ converge {\bf
stochastically} to the density $\bar{f}:=\frac{d\bar{P}}{dQ}$, that is
\begin{equation*}
\forall\epsilon\in\R^+:\quad Q(\{\omega : |f_n(\omega) - \bar{f}(\omega)| > \epsilon\})
\quad\underset{n\to\infty}{\longrightarrow}\quad 0.
\end{equation*}
\item[(iii)] There is a probability measure $Q$, which dominates $\bar{P}$ and
all of the $P_n$ such that the densities $f_n:=\frac{dP_n}{dQ}$ converge {\bf
in mean} (in $\mathbf{L_1(Q)}$) to the density $\bar{f}:=\frac{d\bar{P}}{dQ}$,
that is
\begin{equation*}
\int |f_n - \bar{f}|\,dQ \quad\underset{n\to\infty}{\longrightarrow}\quad 0.
\end{equation*}
\end{enumerate}
\end{theorem}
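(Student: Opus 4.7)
The plan is to establish the cycle (iii)$\Rightarrow$(i)$\Rightarrow$(iii) together with the equivalence (iii)$\iff$(ii); combined, this yields the three-way equivalence stated in the theorem. The backbone is the classical duality between the total variation distance of two probability measures and the $L_1$-distance of their Radon-Nikodym derivatives with respect to a common dominating measure, with a Scheff\'e-type argument supplying the only nontrivial step.

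I would start with the equivalence (i)$\iff$(iii). The direction (iii)$\Rightarrow$(i) is immediate: if a dominating $Q$ exists and $f_n\to \bar f$ in $L_1(Q)$, then for every $B\in\cB$ one has $|P_n(B)-\bar P(B)| = |\int_B (f_n-\bar f)\,dQ| \le \|f_n-\bar f\|_1$, and the right-hand side is independent of $B$. For the converse (i)$\Rightarrow$(iii) I would exhibit a dominating measure by the canonical construction $Q := \tfrac12 \bar P + \tfrac12 \sum_{n\ge 1} 2^{-n}P_n$, which manifestly dominates $\bar P$ and every $P_n$, so the corresponding densities $f_n,\bar f$ exist. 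Applying the Hahn decomposition to the signed measure $P_n - \bar P$ with respect to $Q$ and setting $A_n := \{f_n \ge \bar f\}$, one obtains
\begin{equation*}
\|f_n-\bar f\|_1 = \bigl(P_n(A_n) - \bar P(A_n)\bigr) + \bigl(\bar P(\O\setminus A_n) - P_n(\O\setminus A_n)\bigr) \le 2 \sup_{B\in\cB}|P_n(B)-\bar P(B)|,
\end{equation*}
and the right-hand side tends to $0$ by (i). The implication (iii)$\Rightarrow$(ii) is in turn just Markov's inequality applied to $|f_n-\bar f|$.

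The subtle step is (ii)$\Rightarrow$(iii), since stochastic convergence does not imply $L_1$ convergence in general. Here I would exploit that $f_n,\bar f$ are probability densities, so that $\int f_n\,dQ = \int \bar f\,dQ = 1$. Via the elementary identity $|a-b| = a + b - 2\min(a,b)$, the required convergence $\|f_n-\bar f\|_1\to 0$ reduces to showing $\int \min(f_n,\bar f)\,dQ \to 1$. Since $f_n\to \bar f$ in $Q$-measure, the same holds for $\min(f_n,\bar f)\to \bar f$, and the bound $0 \le \min(f_n,\bar f) \le \bar f \in L_1(Q)$ provides a dominating function; a subsequence argument combined with the dominated convergence theorem then yields $\int \min(f_n,\bar f)\,dQ \to \int \bar f\,dQ = 1$. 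I expect this Scheff\'e-type step to be the main obstacle, as the equality of total masses $\int f_n\,dQ = \int \bar f\,dQ$ is indispensable here: without it, no direct implication from stochastic to $L_1$ convergence could hold.
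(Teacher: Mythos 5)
Your proof is correct, and it is worth noting that the paper itself offers no argument for Theorem~\ref{t.jacka} at all: it simply defers to \cite{Jacka97}, pp.~6--7. Your write-up is therefore a self-contained substitute rather than a variant of an in-paper proof, and each step checks out. The implication (iii)$\Rightarrow$(i) via $|P_n(B)-\bar P(B)|\le\|f_n-\bar f\|_1$ uniformly in $B$ is right; for (i)$\Rightarrow$(iii) your canonical mixture $Q=\tfrac12\bar P+\tfrac12\sum_{n\ge1}2^{-n}P_n$ is exactly the averaging device the paper itself deploys in (\ref{eq.Qa}) and (\ref{eq.Q}), and your identity $\|f_n-\bar f\|_{L_1(Q)}=\bigl(P_n(A_n)-\bar P(A_n)\bigr)+\bigl(\bar P(\Omega\setminus A_n)-P_n(\Omega\setminus A_n)\bigr)$ with $A_n=\{f_n\ge\bar f\}$ is in substance the $L_1$--total-variation isometry the paper records as lemma~\ref{l.sml1} (your invocation of the Hahn decomposition is superfluous --- the set $A_n$ does the job directly --- but harmless). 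The step (iii)$\Rightarrow$(ii) is indeed just Markov's inequality, and your treatment of the only delicate implication (ii)$\Rightarrow$(iii) is the standard Scheff\'e argument done properly: $\|f_n-\bar f\|_1=2-2\int\min(f_n,\bar f)\,dQ$ using $\int f_n\,dQ=\int\bar f\,dQ=1$, the transfer of convergence in $Q$-measure to $\min(f_n,\bar f)$ via $|\min(a,b)-b|\le|a-b|$, domination by $\bar f\in L_1(Q)$, and the subsequence/dominated-convergence trick to upgrade convergence in measure to convergence of the integrals. You are also right to flag that the equal total masses are indispensable there; this is precisely where the hypothesis that the $P_n$ and $\bar P$ are probability measures enters, and it explains why no standardness assumption on $(\Omega,\cB)$ is needed anywhere. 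In short: the argument is complete, elementary, and entirely consonant with the paper's own toolkit, effectively reconstructing what the paper outsources to \cite{Jacka97}.
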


\begin{proof} See \cite{Jacka97}, pp.~6--7.\qed\end{proof}
\medbreak

\subsection{Krengel's theorem}
\label{ss.krengel}

In few words, the {\em stochastic ergodic theorem} of Krengel states that the
averages of densities which are obtained by iterative applications of a {\em
positive contraction} in $L_1(Q)$ converge stochastically to a density
that is invariant with respect to the positive contraction.

To be more precise, let $(\Omega,\mathcal{B},P)$ be a measure space and $U$ a
positive contraction on $L_1(\Omega,\mathcal{B},P)$, that is, $Uf\ge 0$ for
$f\ge 0$ (positivity) and $||Uf||_1 \le ||f||_1$ (contraction).  Then $\Omega$
can be decomposed into two disjoint subsets (uniquely determined up to
$P$-nullsets)
\begin{equation*}
\Omega = \tilde{C}\;\dot{\cup}\;\tilde{D},
\end{equation*}
where $\tilde{C}$ is the maximal support of a $f_0\in
L_1(\Omega,\mathcal{B},P)$ with $Uf_0 = f_0$. In other words, for all $f\in
L_1$ with $Uf = f$, we have $f=0$ on $\tilde{D}$ and there is a $f_0\in L_1$
such that both $Uf_0=f_0$ and $f_0>0$ on $\tilde{C}$ (see \cite{Krengel},
p. 141 ff. for details). Krengel's theorem then reads as follows.

\begin{theorem}[Stochastic ergodic theorem; Krengel]
\label{t.krengel}
If $U$ is a positive contraction on $L_1$ of a $\sigma$-finite measure space
$(\Omega,\mathcal{B},Q)$ (e.g. a probability space, the definition of a
$\sigma$-finite measure space \cite{Halmos} is not further needed here) then,
for any $f\in L_1$, the averages
\begin{equation*}
A_nf := \frac{1}{n}\sum_{t=0}^{n-1}U^tf
\end{equation*}
converge stochastically to a $U$-invariant $\bar{f}$. Moreover, on
$\tilde{C}$ we have $L_1$-convergence, whereas on $\tilde{D}$ the
$A_nf$ converge stochastically to $0$. If $f\ge 0$ then 
\begin{equation}\label{eq.krengelliminf}
f=\liminf_{n\to\infty}A_nf\quad\text{ in }L_1(Q).
\end{equation}
\end{theorem}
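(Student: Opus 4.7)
The plan is to prove Krengel's theorem via the classical Hopf decomposition combined with the Chacon--Ornstein ratio ergodic theorem. First I would construct $\Omega = \tilde{C}\;\dot{\cup}\;\tilde{D}$ by exhausting the lattice of supports of nonnegative $U$-invariant elements of $L_1(Q)$: a maximality argument (essentially Zorn, using $\sigma$-finiteness of $Q$) supplies a nonnegative $f_0 \in L_1(Q)$ with $Uf_0 = f_0$ whose support is maximal, and I set $\tilde{C}$ to be this support and $\tilde{D}$ its complement. Both sets are $U$-stable because of the positivity of $U$ combined with the invariance of $f_0$, and $f_0 > 0$ on $\tilde{C}$ by construction.

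The engine of what follows is Hopf's maximal ergodic inequality: for any $g \in L_1(Q)$, $\int_{E} g\,dQ \ge 0$, where $E := \{\omega : \sup_n \sum_{t=0}^{n-1}(U^t g)(\omega) > 0\}$. Its short derivation uses the bounded maximal operator $M_N g := \max_{1\le n\le N}\sum_{t=0}^{n-1}U^t g$ and the positivity/contraction properties of $U$. Applied with $g = c\cdot\1_{\tilde{D}} - |f|$ for an appropriate $c > 0$, it forces $\sum_{t=0}^{\infty} U^t|f| < \infty$ $Q$-a.e.\ on $\tilde{D}$, and hence $A_n f \to 0$ $Q$-a.e.\ (and in particular stochastically) on $\tilde{D}$. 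On $\tilde{C}$ I would invoke the Chacon--Ornstein ratio theorem for the pair $(f, f_0)$: the ratios $\tfrac{\sum_{t=0}^{n-1}U^t f}{\sum_{t=0}^{n-1}U^t f_0}$ converge $Q$-a.e.\ on $\tilde{C}$ to a $U$-invariant $h$. Since $U f_0 = f_0$ on $\tilde{C}$, the denominator is simply $n f_0$, so $A_n f \to h f_0$ $Q$-a.e.\ on $\tilde{C}$; the $L_1$-upgrade on $\tilde{C}$ follows by truncation of $f$ plus dominated convergence, controlled by the contraction estimate $\|A_n f\|_1 \le \|f\|_1$.

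Gluing the two pieces yields a globally defined $U$-invariant stochastic limit $\bar{f} := h f_0 \cdot \1_{\tilde{C}}$, with $L_1$-convergence on $\tilde{C}$ and stochastic convergence to $0$ on $\tilde{D}$. The terminal identity $f = \liminf_{n\to\infty} A_n f$ in $L_1(Q)$ for $f \ge 0$ is then extracted by combining Fatou's lemma applied to the nonnegative $A_n f$, the $L_1$-convergence on $\tilde{C}$, and the $Q$-a.e.\ decay on $\tilde{D}$, using that for $f \ge 0$ the contraction cannot lose mass on the conservative part and that the liminf is a.e.\ attained by the stochastic limit along a subsequence.

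The principal obstacle is the Chacon--Ornstein ratio theorem itself: its proof via the filling-scheme combined with the maximal inequality is the longest and most intricate technical passage and constitutes the heart of the argument; the Hopf decomposition adds subtle exhaustion arguments and the verification that $\tilde{C}, \tilde{D}$ are $U$-stable, but these are comparatively short once the maximal inequality is in hand. A second, smaller technical point is the passage from a.e.\ convergence on $\tilde{C}$ to $L_1$-convergence, which needs the careful truncation plus contraction bound described above, and the identification of the behavior on the "boundary" $\{f_0 = 0\} \subset \tilde{C}$ via a second exhaustion step.
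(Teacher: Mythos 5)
The paper does not actually prove this statement: Theorem~\ref{t.krengel} is imported verbatim from Krengel's book (the ``proof'' is the citation \cite{Krengel}, p.~143), so your sketch can only be judged against the classical argument, and there it has one fatal flaw. You conflate the decomposition $\Omega=\tilde{C}\,\dot\cup\,\tilde{D}$ used in the theorem (where $\tilde{C}$ is the maximal support of an integrable $U$-invariant function) with Hopf's conservative/dissipative decomposition $\Omega=C\,\dot\cup\,D$. Your maximal-inequality step --- applying Hopf's inequality with $g=c\cdot\mathbf{1}_{\tilde{D}}-|f|$ to conclude $\sum_{t=0}^{\infty}U^t|f|<\infty$ and hence $A_nf\to 0$ $Q$-a.e.\ on $\tilde{D}$ --- is valid only on the dissipative part $D$. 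But $\tilde{D}=D\cup(C\setminus\tilde{C})$, and the conservative-null piece can be everything: take $U$ to be the transfer operator of an ergodic, conservative, measure-preserving transformation of an infinite $\sigma$-finite space. Then no nonzero $f_0\in L_1$ with $Uf_0=f_0$ exists (it would yield a finite absolutely continuous invariant measure), so $\tilde{C}=\emptyset$ and $\tilde{D}=\Omega$, while conservativity forces $\sum_t U^t f=\infty$ a.e.\ for every $0\le f\in L_1$ with $\int f>0$ --- directly contradicting your key claim. Worse, on $C\setminus\tilde{C}$ the convergence $A_nf\to 0$ holds in general \emph{only} stochastically: almost-everywhere convergence of Ces\`aro averages fails for general positive $L_1$ contractions (Chacon's counterexample), which is precisely why the theorem is ``stochastic''. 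So the genuinely hard part of the theorem is the conservative-null region that your sketch dismisses in one incorrect line; handling it requires a separate filling-scheme/truncation analysis of the same depth as the Chacon--Ornstein machinery you budgeted only for $\tilde{C}$.

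The rest is essentially sound but needs tightening. Your use of Chacon--Ornstein on $\tilde{C}$ with denominator $\sum_{t=0}^{n-1}U^tf_0=nf_0$ is correct and does give a.e.\ convergence of $A_nf$ there; however, the $L_1$-upgrade on $\tilde{C}$ does not follow from ``truncation plus dominated convergence'' alone --- you need uniform integrability, e.g.\ by passing to the operator $\hat{U}g:=U(gf_0)/f_0$ on $L_1(f_0\,dQ)$, which satisfies $\hat{U}1=1$ and is a contraction of both $L_1$ and $L_\infty$ of a finite measure space, and by controlling the mass that $U^tf$ leaks from $\tilde{C}$ into $\tilde{D}$. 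Also note that $\mathbf{1}_{\tilde{D}}$ need not be integrable in a $\sigma$-finite space, so even the (insufficient) Hopf step would require an exhaustion of $\tilde{D}$ by sets of finite measure. Finally, the identity (\ref{eq.krengelliminf}) should be read as $\bar{f}=\liminf_{n\to\infty}A_nf$ (the ``$f=$'' in the statement is a slip); your Fatou/subsequence extraction of it is fine, but only once stochastic convergence to $0$ on all of $\tilde{D}$ --- the very point your argument fails to establish --- is in hand.
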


\begin{proof} 
\cite{Krengel}, p.143.\qed\end{proof}
\medbreak

\subsection{Finite Signed Measures}
\label{ss.sm}

Let $(\Omega,\mathcal{B})$ be a measurable space. A finite signed measure is a
$\sigma$-additive, but not necessarily positive, finite set function on
$\cB$. The theorem of the Jordan decomposition (\cite{Halmos}, p.~120 ff.)
states that $P = P_+ - P_-$ for measures $P_+,P_-$. These measures are
uniquely determined insofar as if $P = P_1 - P_2$ for measures $P_1,P_2$ then
there is a measure $\delta$ such that 
\begin{equation}\label{eq.jordan}
P_1 = P_+ + \delta\quad\text{ and }\quad P_2 =
P_-+\delta.
\end{equation}
$P_+,P_-$ and $|P|:=P_++P_-$ are called {\em positive, negative}
and {\em total variation} of $P$. We further define
\begin{equation*}
||P||_{TV} := |P|(\O).
\end{equation*}
By ``eventwise'' addition and scalar multiplication the set of finite
signed measures can be made a normed vector space equipped with the
{\em norm of total variation} $||.||_{TV}$, written $(\cP,||.||_{TV})$
or simply $\cP$. The following observation about signed measures and
measurable functions is crucial for this work.

\begin{lemma}\label{l.ttv}
Let $P$ be a finite signed measure on $(\O,\cB)$ and $T:\O\to\O$
a measurable function. Then $P\circ T^{-1}$ is a finite signed measure
for which
\begin{equation*}
|P\circ T^{-1}|(B) \le |P|(T^{-1}B)
\end{equation*}
for all $B\in\cB$. In particular, $||P\circ T^{-1}||_{TV} \le
||P||_{TV}$.
\end{lemma}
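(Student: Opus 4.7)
The plan is to reduce everything to the Jordan decomposition and exploit its minimality property, which the paper records as equation (\ref{eq.jordan}).

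First I would verify that $P\circ T^{-1}$ is a finite signed measure. Writing $P = P_+ - P_-$ for the Jordan decomposition, I would set
\begin{equation*}
P\circ T^{-1} = P_+\circ T^{-1} - P_-\circ T^{-1}.
\end{equation*}
Each summand is manifestly a (positive) finite measure: pullbacks along $T$ preserve disjointness, so $\sigma$-additivity transfers directly, and $P_\pm\circ T^{-1}(\Omega) = P_\pm(\Omega) < \infty$. Hence $P\circ T^{-1}$, being a difference of two finite positive measures, is a finite signed measure.

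Next, the display above is a decomposition of $P\circ T^{-1}$ as a difference of positive measures, but not necessarily its Jordan decomposition. This is exactly the situation governed by (\ref{eq.jordan}): there must exist a (positive) measure $\delta$ on $(\Omega,\cB)$ such that
\begin{equation*}
P_+\circ T^{-1} = (P\circ T^{-1})_+ + \delta
\quad\text{and}\quad
P_-\circ T^{-1} = (P\circ T^{-1})_- + \delta.
\end{equation*}
Adding these two identities and evaluating on an arbitrary $B\in\cB$ gives
\begin{equation*}
|P|(T^{-1}B) = (P_+ + P_-)(T^{-1}B) = |P\circ T^{-1}|(B) + 2\delta(B),
\end{equation*}
and because $\delta$ is a positive measure this yields $|P\circ T^{-1}|(B)\le |P|(T^{-1}B)$, which is the first claim.

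Finally, the total-variation estimate follows by taking $B=\Omega$ and using $T^{-1}\Omega = \Omega$:
\begin{equation*}
\|P\circ T^{-1}\|_{TV} = |P\circ T^{-1}|(\Omega) \le |P|(T^{-1}\Omega) = |P|(\Omega) = \|P\|_{TV}.
\end{equation*}
There is no real obstacle here; the only subtle point is not to confuse the decomposition $P_+\circ T^{-1} - P_-\circ T^{-1}$ with the Jordan decomposition of $P\circ T^{-1}$, since $P_\pm\circ T^{-1}$ need no longer be mutually singular. Invoking (\ref{eq.jordan}) to absorb the discrepancy into a positive $\delta$ cleanly sidesteps this issue.
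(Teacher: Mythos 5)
Your proof is correct and follows essentially the same route as the paper: decompose $P\circ T^{-1}$ as $P_+\circ T^{-1} - P_-\circ T^{-1}$, invoke the uniqueness property (\ref{eq.jordan}) of the Jordan decomposition to produce the positive measure $\delta$, and conclude by positivity of $\delta$ and the choice $B=\O$. Your additional remarks (verifying $\sigma$-additivity of the pullback and flagging that $P_\pm\circ T^{-1}$ need not be mutually singular) only make explicit what the paper leaves implicit.
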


\begin{proof}
Note that $P\circ T^{-1} = P_+\circ T^{-1} - P_-\circ
T^{-1}$ is a decomposition into a difference of measures. Because of the
uniqueness property of the Jordan decomposition (\ref{eq.jordan}), there is a
measure $\delta$ such that $P_+\circ T^{-1} = (P\circ T^{-1})_+ +\delta$ and
$P_-\circ T^{-1} = (P\circ T^{-1})_- + \delta$. Therefore $|P\circ T^{-1}|(B) =
(P\circ T^{-1})_+(B) + (P\circ T^{-1})_-(B)\le P_+(T^{-1}B) + P_-(T^{-1}B) =
|P|(T^{-1}B)$. $B=\O$ yields the last assertion, as $T^{-1}\O = \O$.\qed
\end{proof}
\medbreak

We finally observe the following well known relationship between signed
measures dominated by a measure $Q$ and $L_1(Q)$. Therefore, as usual
(e.g.~\cite{Halmos}), we say that a finite, signed measure $P$ is dominated by
$Q$ if its total variation is, that is, $|P|<<Q$. Note that the set $\cP_Q$ of
finite, signed measures that are dominated by $Q$ is a linear subspace of
$\cP$.

\begin{lemma}\label{l.sml1}
Let $Q$ be a measure on the measurable space $(\O,\cB)$ and $\cP_Q$ be the
linear space of the finite signed measures that are dominated by $Q$. If
$P_f(B) := \int_Bf\,dQ$ for $f\in L_1(Q)$, then
\begin{equation*}
\begin{array}{rccc}
\Phi:& (L_1(Q),||.||_1) & \longrightarrow& (\cP_Q,||.||_{TV})\\
     & f     & \mapsto        & P_f
\end{array}
\end{equation*}
establishes an isometry of normed vector spaces. 
\end{lemma}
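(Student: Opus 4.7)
The plan is to verify the four properties packaged in the word ``isometry of normed vector spaces'': well-definedness, linearity, bijectivity, and norm preservation. Well-definedness and linearity are the easy book-keeping: for $f\in L_1(Q)$ the set function $P_f(B)=\int_B f\,dQ$ is $\sigma$-additive by dominated convergence, finite since $|P_f(B)|\le\int|f|\,dQ<\infty$, and clearly $P_f<<Q$ so $P_f\in\cP_Q$; linearity $\Phi(\alpha f+\beta g)=\alpha P_f+\beta P_g$ follows from linearity of the integral; and $\Phi(f)$ only depends on the $Q$-a.e.\ equivalence class of $f$, so $\Phi$ is well-defined on $L_1(Q)$.

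The meat of the argument is the norm identity $\|P_f\|_{TV}=\|f\|_1$. I would obtain this by identifying the Jordan decomposition of $P_f$ explicitly. Writing $f=f^+-f^-$ with $f^\pm:=\max(\pm f,0)\ge 0$, both $P_{f^+}$ and $P_{f^-}$ are (positive) finite measures on $(\O,\cB)$, and $P_f=P_{f^+}-P_{f^-}$. By the uniqueness clause~(\ref{eq.jordan}) of the Jordan decomposition quoted in the paper, there is a measure $\delta$ with $P_{f^+}=(P_f)_+ +\delta$ and $P_{f^-}=(P_f)_- +\delta$. Taking $B_0:=\{f\ge 0\}$ we have $P_{f^-}(B_0)=0$, hence $\delta(B_0)=0$; taking its complement we get $\delta(B_0^c)=0$ as well, so $\delta\equiv 0$. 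Therefore $(P_f)_\pm=P_{f^\pm}$, giving $|P_f|=P_{f^+}+P_{f^-}=P_{|f|}$ and
\begin{equation*}
\|\Phi(f)\|_{TV}=|P_f|(\O)=\int_\O|f|\,dQ=\|f\|_1.
\end{equation*}

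Isometry immediately gives injectivity: $\Phi(f)=\Phi(g)$ forces $\|f-g\|_1=0$, i.e.\ $f=g$ in $L_1(Q)$. For surjectivity I would take any $P\in\cP_Q$, apply the Jordan decomposition $P=P_+-P_-$, and note that $|P|<<Q$ implies $P_\pm<<Q$, so the Radon--Nikodym theorem (quoted earlier in the paper) produces nonnegative densities $g_\pm=dP_\pm/dQ\in L_1(Q)$; then $f:=g_+-g_-\in L_1(Q)$ and $\Phi(f)=P_{g_+}-P_{g_-}=P_+-P_-=P$.

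The only mildly delicate step is the Jordan-uniqueness argument for $(P_f)_\pm=P_{f^\pm}$; the rest is bookkeeping. An alternative route, if that identification feels heavy, would be to establish $\|P_f\|_{TV}=\int|f|\,dQ$ directly by showing $\|P_f\|_{TV}=\sup\{\int h\,f\,dQ:\|h\|_\infty\le 1\}$ via a Hahn decomposition $\O=\Omega_+\cup\Omega_-$ for $P_f$ and choosing $h=\mathbf{1}_{\Omega_+}-\mathbf{1}_{\Omega_-}$, but the Jordan approach stays closer to the vocabulary already set up in Section~\ref{ss.sm} and avoids introducing new machinery.
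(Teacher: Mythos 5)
Your proposal is correct and follows essentially the same route as the paper, which also builds the inverse map by applying Radon--Nikodym to $P_\pm$ for $P\in\cP_Q$ and declares the norm identity ``straightforward to check.'' Your Jordan-uniqueness argument identifying $(P_f)_\pm=P_{f^\pm}$ (hence $|P_f|=P_{|f|}$) is precisely the detail the paper omits, and it correctly uses only the uniqueness clause (\ref{eq.jordan}) already set up in Section~\ref{ss.sm}.
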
 

\begin{proof}
This is a consequence of the theorem of Radon-Nikodym, see
\cite{Halmos}, p.~128 ff.  If $P$ is a finite signed measure with 
$|P|<<Q$ then also $P_+,P_-<<Q$. Define $\Psi(P):=\frac{dP_+}{dQ} -
\frac{dP_-}{dQ}\in L_1(Q)$ as the difference of the densities of
$P_+,P_-$ relative to $Q$. Then $\Psi$ is just the inverse of $\Phi$.
It is straightforward to check that $||f||_1 = ||\Phi(f)||_{TV}$.\qed\end{proof}
\medbreak

\section{Proof of Lemma \ref{l.amsconv}}
\label{sec.amsconv}

We start by illustrating one of the core techniques of this work.  Let
$(\Omega,\cB)$ be a measurable space and $(Q_n)_{n\in\N}$ be a
countable collection of probability measures on it. Then the set
function defined by
\begin{equation}
\label{eq.Qa}
Q(B) := \sum_{n\ge 0}2^{-n-1}Q_n(B)\quad\forall B\in\cB
\end{equation}
is a probability measure which dominates all of the $Q_n$ \cite{Jacka97}.\par
Let now $(\O,\cB,P,T)$ be such that $P$ is an AMS measure
relative to the measurable $T:\O\to\O$. Define further $P_n$ to be the
measures given by
\begin{equation}\label{eq.pn}
P_n(B)=\frac1n\sum_{t=0}^{n-1}P(T^{-t}B)
\end{equation}
for $B\in\cB$. As a consequence of (\ref{eq.Qa}), the set function $Q$
defined by
\begin{equation}\label{eq.Q}
Q(B) := \frac12(\bar{P}(B) + \sum_{n\ge 0}2^{-n-1}P(T^{-n}B))
\end{equation}
for $B\in\cB$ is a probability measure which dominates all of the
$P\circ T^{-n}$ as well as $\bar{P}$. Hence it also dominates
all of the $P_n$. Accordingly, we write
\begin{equation}\label{eq.fn}
f_n :=\frac{dP_n}{dQ} \quad\text{ and }\quad \bar{f}:=\frac{\bar{P}}{dQ}
\end{equation}
for the respective densities. Lemma~\ref{l.amsconv} can be obtained as a
corollary of the following result.

\begin{lemma}\label{l.amsconv2}
Let $P$ be an AMS probability measure on $(\Omega,\cB)$ relative to
$T$ with stationary mean $\bar{P}$. Let $P_n$, $Q$, $f_n$ and
$\bar{f}$ as defined by equations (\ref{eq.pn}),(\ref{eq.Q}) and
(\ref{eq.fn}). Then the $f_n$ converge stochastically to the density
$\bar{f}:=\frac{d\bar{P}}{dQ}$.  Moreover,
\begin{equation}\label{eq.liminffn}
\bar{f} = \liminf_{n\to\infty}f_n\quad \text{$Q$-a.e.}
\end{equation}
\end{lemma}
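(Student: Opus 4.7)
The plan is to recognize the sequence $(f_n)$ as Cesàro averages of a positive contraction on $L_1(Q)$, so that Krengel's stochastic ergodic theorem (theorem~\ref{t.krengel}) applies directly. The natural candidate is the transfer operator $U: L_1(Q) \to L_1(Q)$ defined by $Uf := d(\nu_f \circ T^{-1})/dQ$, where $\nu_f$ denotes the finite signed measure $fQ$ associated to $f$ via the isometry of lemma~\ref{l.sml1}. The first task is to verify that $U$ is well-defined, i.e., that $\nu_f \circ T^{-1} \ll Q$; by lemma~\ref{l.ttv} this reduces to showing $Q \circ T^{-1} \ll Q$. Here the specific choice (\ref{eq.Q}) of $Q$ is essential: if $Q(B) = 0$ then $\bar P(B) = 0$ and $P(T^{-n}B) = 0$ for every $n\ge 0$, so by stationarity of $\bar P$ and a re-indexing one has both $\bar P(T^{-1}B) = 0$ and $P(T^{-(n+1)}B) = 0$ for all $n\ge 0$, whence $Q(T^{-1}B) = 0$.

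Once $U$ is defined I would verify that it is a positive contraction. Positivity is immediate since $f\ge 0$ makes $\nu_f$ and hence its pushforward a positive measure, while the bound $\|Uf\|_1 \le \|f\|_1$ follows by combining lemma~\ref{l.sml1} (which identifies the $L_1(Q)$-norm with the total-variation norm) with lemma~\ref{l.ttv} (which controls total variation under $T^{-1}$, using $T^{-1}\Omega = \Omega$). Writing $f_1 := dP/dQ$, one checks by construction that $U^t f_1 = d(P\circ T^{-t})/dQ$, so $f_n = A_n f_1$ in the notation of theorem~\ref{t.krengel}. Krengel's theorem then supplies a $U$-invariant $g \in L_1(Q)$ with $f_n \to g$ stochastically, enjoying $L_1$-convergence on the positive set $\tilde C$ and stochastic convergence to $0$ on $\tilde D$.

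It remains to identify $g$ with $\bar f$ and to obtain the liminf identity. Stationarity of $\bar P$ gives $U\bar f = \bar f$, so $\bar f$ is also $U$-invariant and consequently vanishes on $\tilde D$, matching $g$ there. For $B \subseteq \tilde C$, the $L_1$-convergence guaranteed by Krengel yields $\int_B f_n\,dQ \to \int_B g\,dQ$, while the defining AMS property (\ref{eq.ams}) gives $\int_B f_n\,dQ = P_n(B) \to \bar P(B) = \int_B \bar f\,dQ$; comparing these limits forces $g = \bar f$ $Q$-a.e.~on $\tilde C$, hence on all of $\Omega$. The liminf assertion (\ref{eq.liminffn}) then follows from the corresponding statement (\ref{eq.krengelliminf}) in Krengel's theorem, applied to the nonnegative $f_1$. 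The main obstacle I anticipate is not any single step in isolation but the initial setup: the very construction of $U$, and in particular its well-definedness on $L_1(Q)$, hinges crucially on the ad hoc choice of dominating measure $Q$ made in (\ref{eq.Q}), and without that choice the transfer-operator framework would not close up.
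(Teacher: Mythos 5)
Your proposal is correct and follows essentially the same route as the paper: the transfer operator $U=\Phi^{-1}(\Phi(f)\circ T^{-1})$, the verification of well-definedness via lemma~\ref{l.ttv} together with the specific form of $Q$ in (\ref{eq.Q}) (your contrapositive phrasing as $Q\circ T^{-1}\ll Q$ is the same argument), the positive-contraction check through the isometry of lemma~\ref{l.sml1}, Krengel's theorem, and the identification of the limit with $\bar f$ separately on $\tilde C$ (via $L_1$-convergence plus the AMS property) and on $\tilde D$ (where both vanish). No substantive differences.
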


\medskip\begin{proof}
Let $f_1=\frac{dP}{dQ}$. The road map of the proof is to
construct a positive contraction $U$ on $L_1(Q)$ such that
\begin{equation*}
f_n = \frac1n\sum_{t=0}U^tf_1 =: A_nf_1.
\end{equation*}
As a consequence of Krengel's theorem we will obtain that the
$f_n$ converge stochastically to a $U$-invariant limit
$f^*$. In a final step we will show that indeed $f^* = \bar{f}$
in $L_1(Q)$ (i.e. $Q$-a.e.), which completes the proof.
\medbreak

Our endomorphism $U$ on $L_1(Q)$ is induced by the measurable function
$T$. Let $f\in L_1(Q)$. We first recall that, by lemma \ref{l.sml1}, the
set function $\Phi(f)$ given by
\begin{equation*}
\Phi(f)(B) := \int_Bf\,dQ
\end{equation*}
for $B\in\cB$ and $f\in L_1(Q)$ is a finite, signed measure 
on $(\Omega,\mathcal{B})$ whose total variation $|\Phi(f)|$ is 
dominated by $Q$.
\medbreak

We would like to define
\begin{equation*}
Uf:=\Phi^{-1}(\Phi(f)\circ T^{-1}),
\end{equation*}
which would be obviously linear. However, $\Phi^{-1}$ is only defined
on $\cP_Q$, that is, for finite signed measures that are dominated by
$Q$. Therefore, we have to show that $\Phi(f)\circ T^{-1}\in\cP_Q$
which translates to demonstrating that $|\Phi(f)\circ
T^{-1}|<<Q$. This does not hold in general (see
\cite{Krengel}). However, in the special case of the dominating $Q$
chosen here, it can be proven.
\medbreak

To see this let $B$ such that $|\Phi(f)\circ T^{-1}|(B) > 0$ and we have to
show that $Q(B)>0$.  Because of lemma \ref{l.ttv}  
\begin{equation*}
|\Phi(f)|(T^{-1}B) \ge |\Phi(f)\circ T^{-1}|(B) > 0.
\end{equation*}
As $|\Phi(f)| << Q$, we obtain $Q(T^{-1}B) > 0$. By definition of $Q$ we thus
either find an $N_0\in\mathbb{N}$ such that $0 < P(T^{-N_0}(T^{-1}B)) =
P(T^{-N_0-1}B)$ or we have that $0 < \bar{P}(T^{-1}B) = \bar{P}(B)$ because of
the stationarity of $\bar{P}$. Both cases imply $Q(B) >
0$ which we had to show.  \medbreak

If $f\ge 0$ then $\Phi(f)$ is a measure. Hence also $\Phi(f)\circ T^{-1}$ is
a measure which in turn implies $Uf=\frac{d(\Phi(f)\circ T^{-1})}{dQ·}\ge 0$.  
Hence $U$ is positive.
It is also a contraction with respect to the $L_1$-norm $||.||_1$, as,
because of the lemmata \ref{l.ttv} and \ref{l.sml1},
\begin{equation*}
||Uf||_1 = ||\Phi(f)\circ T^{-1}||_{TV} \le ||\Phi(f)||_{TV} = ||f||_1.
\end{equation*}
\medbreak

For $f_1 = \frac{dP}{dQ}$ being the density of $P$ relative to $Q$ we
obtain
\begin{equation*}
U^nf_1 = \frac{d(P\circ T^{-n})}{dQ}
\end{equation*}
Hence the $f_n:=A_nf_1 = 1/n\sum_{t=0}^{n-1}U^tf_1$ are the densities
of the $P_n=\frac1n\sum_{t=0}^{n-1}P\circ T^{-t}$ relative to $Q$.  An
application of Krengel's theorem \ref{t.krengel} then shows that the
$A_nf_1$ converge stochastically to a $U$-invariant limit $f^*\in
L_1(Q)$. Note that a positive $U$-invariant $f$ just corresponds to a
stationary measure.  \medbreak

It remains to show that $\bar{f} = f^*$ in $L_1(Q)$ or, equivalently,
$\bar{f} = f^*$ $Q$-a.e. for their representatives (see the
discussions in subsection~\ref{ss.conv}). Let $\tilde{D}$, as
described in subsection~\ref{ss.krengel}, be the complement of the
maximal support of a $U$-invariant $g\in L_1(Q)$. We recall that
stationary measures are identified with positive, $U$-invariant
elements of $L_1(Q)$. Therefore, 
$\bar{f}=\frac{d\bar{P}}{dQ}$ is $U$-invariant which yields
\begin{equation*}
Q(\{\bar{f} > 0\} \;\cap \;\tilde{D}) = 0
\end{equation*}
which implies $\bar{f} = 0$ $Q$-a.e. on $\tilde{D}$. Due to Krengel's
theorem, it holds that also $f^* = 0$ $Q$-a.e. on $\tilde{D}$, and we
obtain that
\begin{equation*}
\bar{f} = 0 = f^*\quad Q-\text{a.e. on }\tilde{D}.
\end{equation*}
In order to conclude that 
\begin{equation*}
\bar{f} = f^*\quad Q-\text{a.e. on }\tilde{C}
\end{equation*}
it remains to show that $\int_Bf^*\,dQ=\int_B\bar{f}\,dQ$ for events
$B\subset\tilde{C}=\Omega\setminus\tilde{D}$ as two integrable
functions conincide almost everywhere if their integrals over
arbitrary events coincide (\cite{Halmos}) with which we will have
completed the proof. From Krengel's theorem we know that, on
$\tilde{C}$, we have $L_1$-convergence of the $f_n$:
\begin{equation}
\label{eq.l1conv}
\lim_{n\to\infty}\int_{\tilde{C}}|f_n - f^*|\,dQ = 0.
\end{equation}
Therefore, for $B\subset\tilde{C}$,
\begin{equation*}
\int_Bf^*\,dQ \stackrel{(\ref{eq.l1conv})}{=} \lim_{n\to\infty}\int_Bf_n\,dQ 
= \lim_{n\to\infty}P_n(B) \stackrel{(**)}{=} \bar{P}(B) = \int_B\bar{f}\,dQ,
\end{equation*}
where $(**)$ follows from the asymptotic mean stationarity of $P$.  We
thus have completed the proof of the main statement of the lemma.
\medbreak

Finally, (\ref{eq.liminffn}) is a direct consequence of (\ref{eq.krengelliminf})
in Krengel's theorem.\qed\end{proof}
\medbreak

In sum, we have shown that there is a measure $Q$ that dominates all of the
$P_n$ as well as $\bar{P}$ such that the densities of the $P_n$ converge
stochastically to the density of $\bar{P}$. According to
theorem~\ref{t.jacka},
this is equivalent to Skhorokhod weak convergence. Hence we obtain lemma
\ref{l.amsconv} as a corollary.

\section{Preliminaries II}
\label{sec.standardcond}

In this section we will first review a couple of additional definitions that
are necessary for a proof of theorem~\ref{t.amsdecomp}. In
subsection~\ref{ssec.standard} we give the definition of a standard space. The
beneficial properties of standard spaces become apparent in
subsection~\ref{ssec.regular}, where we shortly review conditional
probabilities and expectation.\par 

\subsection{Standard spaces}\label{ssec.standard}

See \cite{Partha05}, ch.~3 or \cite{Gray01} for thorough treatments of
standard spaces. In the following, a field $\cF$ is a collection of subsets of
a set $\O$ that contains $\O$ and is closed with respect to complements and
finite unions.

\begin{definition}\label{d.count}
A field $\cF$ on a set $\Omega$ is said to have 
the {\bf countable extension property} if the following two conditions
are met.
\begin{enumerate}
\item $\cF$ has a countable number of elements.
\item Every nonnegative and finitely additive set function $P$ on $\cF$
      is continuous at $\emptyset$, that is, for a sequence of elements
      $F_n\in\cF$ with $F_{n+1}\subset F_n$ such that
      $\cap_nF_n=\emptyset$ we have $\lim_{n\to\infty}P(F_n) = 0$.
\end{enumerate}
\end{definition}

\begin{definition}\label{d.standard}
A measurable space $(\Omega,\cB)$ is called a {\bf standard space}, if
the $\sigma$-algebra $\cB$ is generated by a field $\cF$ which has
the countable extension property. 
\end{definition}

\begin{remark} $\quad$
\begin{enumerate}
\item Most of the prevalent examples of
measurable spaces in practice are standard. For example, any
measurable space which is generated by a complete, separable, metric
space (i.e. a {\em Polish space}) is standard. Moreover, standard
spaces can be characterized as being isomorphic to subspaces $(B, \cB\cap
B)$ of Polish spaces $(\O,\cB)$ where $B\in\cB$ is a measurable set (see
\cite{Partha05}, ch.~3).
\item An alternative
characterisation of standard spaces is that the $\sigma$-algebra $\cB$
possesses a {\em basis}. See \cite{Katok}, app.~6, for a
discussion.
\end{enumerate}
\end{remark}
\medbreak

\subsection{Conditional Probability and Expectation}
\label{ssec.regular}

See \cite{Partha05}, ch.~6 or \cite{Gray01} for a discussion of conditional
probability and expectation.

\begin{definition}\label{d.regprob}
Let $P$ be a probability measure on a measurable space $(\Omega,\cB)$
and let $\cG\subset\cB$ be a sub-$\sigma$-algebra of $\cB$. A function
\begin{equation*}
\delta(.,.):\cB\times\Omega\to\R, 
\end{equation*}
is called a (version of the) {\bf conditional probability} of $P$
given $\cG$, if
\begin{description}
\item[(CP1)] $\delta(B,.)$ is $\cG$-measurable for all $B\in\cB$ and
\item[(CP2)] \begin{equation*}
	    P(B\cap G) = \int_G \delta(B,\omega)\,dP(\omega)
            \end{equation*}
            for all $G\in\cG,B\in\cB$.
\end{description}
$\delta(.,.)$ is called a (version of the) {\bf regular conditional
probability} of $P$ given $\cG$, if, in addition to (CP1) and (CP2),
\begin{description}
\item[(RCP)] $\delta(.,\omega)$ is a probability measure on $\cB$ for all $\omega\in\Omega$.
\end{description}
\end{definition}

We collect a couple of basic results about conditional probabilities. See
\cite{Partha05} or \cite{Gray01} for details.
\begin{enumerate}
\item Let $\gamma,\delta$ be two versions of the conditional probability of
$P$ given $\cG$.  Then the $\cG$-measurable functions
$\gamma(B,.),\delta(B,.)$ agree almost everywhere for any given $B\in\cB$,
that is, we have
\begin{equation}\label{eq.regcondae}
\forall B\in\cB:\quad P(\{\omega\;|\; \gamma(B,\omega) = \delta(B,\omega)\}) = 1.
\end{equation}
\item Conditional probabilities always exist. 
Existence of regular conditional probabilities is not assured
for arbitrary measurable spaces. However, for standard spaces
$(\Omega,\cB)$ existence can be proven. 
\item Note that it cannot be shown for arbitrary measurable spaces that two versions 
$\delta,\gamma$ agree almost everywhere \emph{for all} $B\in\cB$, meaning
that we do not have
\begin{equation*}
P(\{\omega \;|\; \forall B\in\cB : \gamma(B,\omega) = \delta(B,\omega)\}) = 1.
\end{equation*}
However, for standard spaces $(\Omega,\cB)$ this beneficial
property applies: 
\end{enumerate}
\medbreak

\begin{lemma}\label{l.standreg}
Let $(\Omega,\cB)$ be a measurable space such that $\cB$ is generated by a
countable field $\cF$. Let $P$ be a probability measure on it and assume that
the regular conditional probability of $P$ given a sub-$\sigma$-algebra $\cG$
exists. If $\delta,\gamma$ are two versions of it then the measures
$\delta(.,\omega)$ and $\gamma(.,\omega)$ agree on a set of measure one, that
is,
\begin{equation*}
P(\{\omega\; |\; \forall B\in\cB:\;\gamma(B,\omega) = \delta(B,\omega)\}) = 1.
\end{equation*}
\end{lemma}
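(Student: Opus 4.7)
The plan is to combine the pointwise almost-sure agreement statement \eqref{eq.regcondae} with the uniqueness of the extension of a probability measure from a countable generating field to the whole $\sigma$-algebra. The key observation is that countability of $\cF$ lets us turn a family of $P$-null exceptional sets, one per element of $\cB$, into a single $P$-null set.

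First, I would invoke \eqref{eq.regcondae} separately for each $F\in\cF$: for every such $F$ there is a $P$-null set $N_F\in\cB$ with
\begin{equation*}
\gamma(F,\omega)=\delta(F,\omega)\quad\text{for all }\omega\in\Omega\setminus N_F.
\end{equation*}
Since $\cF$ is countable by hypothesis, $N:=\bigcup_{F\in\cF}N_F$ is still a $P$-null set, and for every $\omega\in\Omega\setminus N$ the two set functions $\gamma(\cdot,\omega)$ and $\delta(\cdot,\omega)$ coincide simultaneously on all of $\cF$.

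Next, I would use the regularity assumption (RCP): for each fixed $\omega$, both $\gamma(\cdot,\omega)$ and $\delta(\cdot,\omega)$ are genuine probability measures on $(\Omega,\cB)$, in particular on the $\sigma$-algebra generated by $\cF$, which is $\cB$ itself. Two probability measures that agree on a generating field agree on the $\sigma$-algebra it generates (this is the standard uniqueness part of Carathéodory's extension theorem, or equivalently a Dynkin $\pi$-$\lambda$ argument, using that $\cF$, being closed under finite intersections, is a $\pi$-system). Consequently, for every $\omega\in\Omega\setminus N$,
\begin{equation*}
\gamma(B,\omega)=\delta(B,\omega)\quad\text{for all }B\in\cB,
\end{equation*}
and $P(\Omega\setminus N)=1$ gives the claim.

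There is no genuine obstacle in this argument; the only thing to be careful about is that the full-measure set must be chosen \emph{before} quantifying over $B\in\cB$. That is exactly what would fail for uncountable generating classes (an uncountable union of null sets need not be null), and it is precisely the countability of $\cF$ together with the measure-theoretic uniqueness of extensions that rescues the statement in the standard setting.
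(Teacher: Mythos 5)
Your argument is correct and is essentially identical to the paper's own proof: enumerate the countable field, take the countable intersection of the full-measure sets on which the two versions agree on each field element (equivalently, the complement of the countable union of null sets), and then invoke uniqueness of a probability measure determined by its values on a generating field. The only cosmetic difference is that you spell out the $\pi$-system/Carath\'eodory uniqueness step, which the paper simply cites from Halmos.
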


We display the proof, as its (routine) arguments are needed in
subsequent sections.
\medbreak

\begin{proof} Enumerate the elements of $\cF$ and write $F_k$ for element
No. $k$.  According to (\ref{eq.regcondae}) we find for each $k\in\N$
a set $B_k$ of $P$-measure one on which $\delta(F_k,.)$ and
$\gamma(F_k,.)$ agree. Hence, on $B:=\bigcap_kB_k$, which is an event
of $P$-measure one, all of the $\delta(F_k,.)$ and the $\gamma(F_k,.)$
coincide. Thus the measures $\delta(.,\omega)$ and $\gamma(.,\omega)$
agree on a generating field of $\cB$ for $\omega\in B$. As a measure
is uniquely determined by its values on a generating field
(\cite{Halmos}), we obtain that the measures $\delta(.,\omega)$ and
$\gamma(.,\omega)$ agree on $B$, that is, $P$-almost
everywhere.\qed\end{proof}
\medbreak

We also give the definition of conditional expectations and point out their
extra properties on standard spaces.

\begin{definition}\label{d.condexp}
Let $(\Omega,\cB,P)$ be a probability space and $f\in L_1(P)$.
Let $\cG\subset\cB$ be a sub-$\sigma$-algebra.
If $h:\Omega\to\R$ is 
\begin{enumerate}
\item $\cG$-measurable and
\item for all $G\in\cG$ it holds that $\int_Gf\,dP = \int_Gh\,dP$
\end{enumerate}
we say that $h$ is a {\bf version of the conditional expectation}
of $f$ given $\cG$ and write
\begin{equation*}
h(\omega) = E(f|\cG)(\omega).
\end{equation*}
\end{definition}

Conditional expectations always exist. In case of standard spaces they
have an extra property which we rely on. See \cite{Partha05}, ch.~6 for
proofs of the following results.

\begin{theorem}\label{t.condexp}
Let $(\Omega,\cB,P)$ be a probability space, $\cG$ a
sub-$\sigma$-algebra of $\cB$ and $f\in L_1(P)$. Then there exists a
version $E(f|\cG)$ of the conditional expectation.  In case of a
standard space $(\Omega,\cB)$ it holds that
\begin{equation}\label{eq.version}
E(f|\cG)(\omega) = \int f(x)\,d\delta_P(x,\omega)
\end{equation}
where $\delta_P$ is a version of the regular conditional probability
of $P$ given $\cG$.
\end{theorem}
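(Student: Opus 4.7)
The plan has two parts: first establish existence of $E(f|\cG)$ for a general probability space via Radon--Nikodym, and then, in the standard case, verify that the right-hand side of (\ref{eq.version}) is a legitimate version by a standard measure-theoretic bootstrap from indicators up through $L_1(P)$.

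For existence, I would define, for $f\in L_1(P)$, a finite signed measure $\nu$ on the sub-$\sigma$-algebra $\cG$ by $\nu(G) := \int_G f\,dP$. Since $|\nu|(G)\le\int_G|f|\,dP$, we have $\nu\ll P|_{\cG}$, so Radon--Nikodym (as stated in subsection~\ref{ss.conv}) yields a $\cG$-measurable function $h:\O\to\R$ with $\nu(G)=\int_G h\,dP$ for every $G\in\cG$. By Definition~\ref{d.condexp}, $h$ is a version of $E(f|\cG)$. This part is routine and uses nothing beyond Radon--Nikodym.

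For the representation formula in the standard case, fix a version $\delta_P$ of the regular conditional probability of $P$ given $\cG$, which exists by the remarks preceding Lemma~\ref{l.standreg}. Define
\begin{equation*}
h_f(\omega):=\int f(x)\,d\delta_P(x,\omega),
\end{equation*}
and check the two defining properties of a conditional expectation. I would proceed in the usual four steps. \emph{Indicators:} for $f=\1_B$ with $B\in\cB$ we have $h_f(\omega)=\delta_P(B,\omega)$, which is $\cG$-measurable by (CP1); and for any $G\in\cG$, property (CP2) gives $\int_G\delta_P(B,\omega)\,dP(\omega)=P(B\cap G)=\int_G\1_B\,dP$, so both defining properties hold. \emph{Simple functions:} extend by linearity, using that $\delta_P(\cdot,\omega)$ is a measure for every $\omega$ to justify pulling the finite sum inside the integral. \emph{Nonnegative $f\in L_1(P)$:} pick simple $f_n\uparrow f$ pointwise; monotone convergence applied pointwise in $\omega$ to the probability measures $\delta_P(\cdot,\omega)$ shows $h_{f_n}(\omega)\uparrow h_f(\omega)$, so $h_f$ is $\cG$-measurable as a pointwise supremum of $\cG$-measurable functions, and monotone convergence on $(\O,\cG,P)$ gives $\int_G h_{f_n}\,dP\to\int_G h_f\,dP$ while the left-hand sides equal $\int_G f_n\,dP\to\int_G f\,dP$. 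In particular $h_f\in L_1(P)$. \emph{General $f\in L_1(P)$:} decompose $f=f^+-f^-$ and use linearity.

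The one potentially delicate point I expect is ensuring that $h_f$ is genuinely well-defined as a function on all of $\O$ (not merely $P$-a.e.) and is $\cG$-measurable; this is exactly where the regularity of $\delta_P$ is used, since $\delta_P(\cdot,\omega)$ is a bona fide probability measure for \emph{every} $\omega$, so the integrals $\int f\,d\delta_P(\cdot,\omega)$ make sense pointwise. The dependence of $h_f$ on the chosen version of $\delta_P$ is harmless: by Lemma~\ref{l.standreg}, two versions $\delta_P$ and $\gamma_P$ agree as measures on a $P$-full set of $\omega$, so the corresponding $h_f$'s coincide $P$-a.e., which is all that is required since $E(f|\cG)$ is itself only defined up to $P$-null sets. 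With that observation the four-step approximation goes through verbatim and yields (\ref{eq.version}).
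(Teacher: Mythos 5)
Your proof is correct and is the standard argument: the paper does not prove Theorem~\ref{t.condexp} at all but defers to \cite{Partha05}, ch.~6, and the Radon--Nikodym existence proof plus the indicator/simple/monotone/linearity bootstrap you give is exactly the canonical argument found there. The only nitpick is that for general $f\in L_1(P)$ the integral $\int f\,d\delta_P(\cdot,\omega)$ need only be finite for $P$-almost every $\omega$ (one has $\int h_{|f|}\,dP=\int|f|\,dP<\infty$, so $h_{|f|}<\infty$ a.e.\ but not necessarily everywhere), which is harmless since (\ref{eq.version}) is an identity between versions of the conditional expectation and hence only claimed up to $P$-null sets.
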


\begin{corollary}\label{c.condexp}
Let $(\Omega,\cB)$ be a standard space, $P$ a probability measure
on it and $f\in L_1(P)$. Let $\cG$ be a sub-$\sigma$-algebra and
$\delta_P$ the regular conditional probability of $P$ given $\cG$.
Then $\omega\mapsto\int f\,d\delta_P(.,\omega)$
is $\cG$-measurable (hence also $\cB$-measurable) and 
\begin{equation}\label{eq.condexp}
\int_G f\,dP = \int_G (\int f\,d\delta_P(.,\omega))\,dP
\end{equation}
for all $G\in\cG$. 
\end{corollary}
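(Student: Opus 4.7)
The plan is to view this corollary as essentially a direct readout of Theorem \ref{t.condexp} combined with the definition of conditional expectation (Definition \ref{d.condexp}). Nothing fundamentally new has to be proved; the point is only to repackage the content of the theorem into the integral form stated.

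First, I would appeal to Theorem \ref{t.condexp} to produce a version of the conditional expectation $E(f|\cG)$ and to write, for the standard space $(\Omega,\cB)$,
\begin{equation*}
E(f|\cG)(\omega) \;=\; \int f(x)\,d\delta_P(x,\omega).
\end{equation*}
The function on the right-hand side is therefore (a version of) $E(f|\cG)$. Since every version of the conditional expectation is by definition $\cG$-measurable, this already yields the first claim that $\omega\mapsto\int f\,d\delta_P(\cdot,\omega)$ is $\cG$-measurable; $\cB$-measurability is automatic because $\cG\subset\cB$.

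Second, I would invoke property (2) of Definition~\ref{d.condexp}: any version $h$ of $E(f|\cG)$ satisfies $\int_G f\,dP = \int_G h\,dP$ for every $G\in\cG$. Substituting $h(\omega)=\int f\,d\delta_P(\cdot,\omega)$ gives precisely the identity (\ref{eq.condexp}). Thus both assertions of the corollary are immediate corollaries of Theorem \ref{t.condexp}.

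There is essentially no obstacle here, since the heavy lifting (existence of the regular conditional probability on a standard space, and the explicit integral formula for the conditional expectation) has already been carried out in the cited references and recorded in Theorem~\ref{t.condexp}. The only minor point worth noting is well-definedness of the integral $\int f\,d\delta_P(\cdot,\omega)$: since $f\in L_1(P)$ and $\delta_P(\cdot,\omega)$ is a probability measure for every $\omega$, Theorem~\ref{t.condexp} guarantees that this integral exists for $P$-almost every $\omega$ and defines an element of $L_1(P)$, which is exactly what is needed for both the measurability claim and the integration identity to make sense.
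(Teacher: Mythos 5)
Your proposal is correct and matches the paper's intent exactly: the paper treats this corollary as an immediate consequence of Theorem~\ref{t.condexp} together with the two defining properties in Definition~\ref{d.condexp} (citing Parthasarathy for the underlying results), which is precisely the repackaging you carry out.
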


\section{Proof of Theorem \ref{t.amsdecomp}}
\label{sec.decomp}

We recall the notations of section~\ref{sec.results} and that,
according to the assumptions of theorem~\ref{t.amsdecomp}, $P$ is a
measure on a standard space $(\O,\cB)$ that is AMS relative to the
measurable $T:\O\to\O$.

\subsection{Sketch of the Proof Strategy}
\label{ssec.strategy}

The core idea for proving the theorem is to define the measures
$P_{\omega}$ as being induced by the regular conditional probability
measures of $P$ given the invariant events $\cI$. That is, we define
\begin{equation}
\label{eq.pomega}
\forall B\in\cB:\quad P_{\omega}(B):=\delta_P(B,\omega)
\end{equation}
where, here and in the following, $\delta$ refers to regular
conditional probabilities given the invariant events $\cI$.  Note
that, for arbitrary probability measures $P$ on $(\Omega,\cB)$,
\begin{equation}\label{eq.condinv}
\delta_P(B,\omega) = \delta_P(B,T\omega),
\end{equation}
as, otherwise, $\delta_P(B,.)^{-1}({y})$ would not be an invariant set
for $y:=\delta_P(B,T\omega)$ which would be a contradiction to the
$\cI$-measurability of $\delta_P(B,.)$.\\ 

As a consequence of (\ref{eq.condinv}), we obtain property $(a)$ of
the theorem. Furthermore, $(b)$ is the defining property $(CP2)$ of a
regular conditional probability (see Def.~\ref{d.regprob}) and $(c)$
is equation (\ref{eq.condexp}) from corollary~\ref{c.condexp} with
$G=\Omega$. What remains to show is that, for $\omega$ in an invariant
set $E$ of $P$-measure one, the $P_{\omega}$ are ergodic and AMS.\\

We intend to do this by the following strategy.  First, we recall that
if, in theorem \ref{t.amsdecomp}, \emph{AMS} is replaced by
\emph{stationary}, we obtain the well known result of the ergodic
decomposition of stationary measures (see the introduction for a
discussion).  If one follows the lines of argumentation of its proof
(see \cite{Gray84}, th.~2.5) one sees that, on an invariant set of
$P$-measure one, the $P_{\omega}$ are just the regular conditional
probabilities of the stationary $P$.  Applying the ergodic
decomposition of stationary measures to the stationary mean $\bar{P}$
of $P$ provides us with an invariant set $\bar{E}$ of $P$-measure $1$
such that
\begin{equation}\label{eq.barE}
\omega\in\bar{E}\quad\Longrightarrow\quad
\bar{P}_{\omega}:=\delta_{\bar{P}}(.,\omega)\text{ is stationary and ergodic.}
\end{equation}

We will show that, on an invariant set $E\subset\bar{E}$ of
$P$-measure one, the $P_{\omega}$ converge Skorokhod weakly (hence
strongly, see Def.~\ref{d.measureconv}) to the $\bar{P}_{\omega}$,
which translates to that the $P_{\omega}$ are AMS and have stationary
means $\bar{P}_{\omega}$.  As an AMS measure is ergodic if its
stationary mean is ergodic, we will have completed the proof.\\

Therefore, we will proceed according to the following steps:
\paragraph{\bf Step 1} We construct measures $Q_{\omega}$ that dominate
  $\bar{P}_{\omega}$ and all of the
  \begin{equation}\label{eq.pnomega}
  P_{n,\omega}:=\frac1n\sum_{t=0}^{n-1}(P_{\omega}\circ T^{-n}), n\ge
  0 
  \end{equation}
  (note that $P_{\omega} = P_{1,\omega}$), which will provide us
  with densities
  \begin{equation}\label{eq.omegadensities}
  f_{n,\omega} := \frac{dP_{n,\omega}}{dQ_{\omega}}
  \quad\text{ and }\quad \bar{f}_{\omega}:=\frac{d\bar{P}_{\omega}}{dQ_{\omega}}
  \end{equation}
  for all $\omega$.
\paragraph{\bf Step 2} We construct positive contractions $U_{\omega}$ on $L_1(Q_{\omega})$
  such that 
  \begin{equation}
  \label{eq.uomega}
  U_{\omega}\frac{d(P_{\omega}\circ T^{-n})}{dQ_{\omega}} =
  \frac{d(P_{\omega}\circ T^{-n-1})}{dQ_{\omega}}
  \end{equation}
  hence 
  \begin{equation}
  \label{eq.aomega}
  A_nf_{1,\omega} := \frac1n\sum_{t=0}^{n-1}U_{\omega}^tf_{1,\omega} = f_{n,\omega}
  \end{equation}
  We apply Krengel's theorem (th.~\ref{t.krengel}) to obtain that the
  $f_{n,\omega}$ converge stochastically to a $U_{\omega}$-invariant
  $f^*_{\omega}$ as well as $f^*_{\omega} =
  \liminf_{n\to\infty}f_{n,\omega}$ in $L_1(Q_{\omega})$
\paragraph{\bf Step 3} We show that, for $\omega$ in an invariant set $E$ of
  $P$-measure one,
  \begin{equation*}
  f^*_{\omega} = \bar{f}_{\omega}\quad\text{ in }L_1(Q_{\omega}).
  \end{equation*}
  This completes the proof, as this states that the $P_{\omega}$ converge
  Skorokhod weakly to the $\bar{P}_{\omega}$ in $E$, 
  hence that the $P_{\omega}$ are ergodic and AMS for $\omega$ in the 
  invariant set $E$ of $P$-measure one.

\subsection{Step 1}
\label{ssec.step1}

We recall definitions (\ref{eq.pn}) and (\ref{eq.Q}) of $P_n$ and $Q$.
We define $Q_{\omega}$ as the probability measures induced by the
regular conditional probability of $Q$ given the invariant events
$\cI$, that is,
\begin{equation}
\label{eq.Qomega}
Q_{\omega}(B) := \delta_Q(B,\omega)
\end{equation}
for $B\in\cB$.  It remains to show that, by choosing an appropriate
version, $Q_{\omega}$ indeed dominates all of the $P_{\omega}\circ
T^{-n}$ (hence all of the $P_{n,\omega}$) as well as
$\bar{P}_{\omega}$. This is established by the following lemma whose
merely technical proof has been deferred to appendix~\ref{app.lemma}.

\begin{lemma}\label{l.qversion}
\begin{equation}\label{eq.delq}
\alpha(B,\omega) := 
\frac12(\bar{P}_{\omega}(B) 
+ \sum_{n\ge 0}2^{-n-1}P_{\omega}(T^{-n}B))
\end{equation}
is a version of the regular conditional probability of $Q$ given $\cI$.
\end{lemma}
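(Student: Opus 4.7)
The plan is to verify the three defining properties (CP1), (CP2), and (RCP) of the regular conditional probability of $Q$ given $\cI$ (Definition~\ref{d.regprob}) for the candidate $\alpha$. The first two are immediate. For each $\omega$, $\alpha(\cdot,\omega)$ is a convex combination of the probability measures $\bar{P}_\omega$ and $P_\omega\circ T^{-n}$ ($n\ge 0$) with weights summing to $\tfrac12+\sum_{n\ge 0}2^{-n-2}=1$, hence itself a probability measure on $\cB$, yielding (RCP). For each $B\in\cB$, $\omega\mapsto\alpha(B,\omega)$ is a uniformly convergent series of $\cI$-measurable functions (the summands $\delta_{\bar{P}}(B,\cdot)$ and $\delta_P(T^{-n}B,\cdot)$), so it is $\cI$-measurable, yielding (CP1).

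The substantive step is (CP2): for every $G\in\cI$ and $B\in\cB$,
\[
Q(B\cap G) = \int_G \alpha(B,\omega)\,dQ(\omega).
\]
To attack this I first establish the following key identity: for every bounded $\cI$-measurable $f:\O\to\R$,
\[
\int f\,dP = \int f\,d\bar{P}.
\]
Indeed, $\cI$-measurability forces $T$-invariance ($f\circ T=f$), so the pushforward formula yields $\int f\,d(P\circ T^{-i})=\int f\circ T^i\,dP=\int f\,dP$ for every $i$, whence $\int f\,dP_n=\int f\,dP$ for every $n$. Since $P$ is AMS, $P_n\to\bar{P}$ strongly, which for bounded measurable functions implies $\int f\,dP_n\to\int f\,d\bar{P}$ (approximate $f$ uniformly by simple functions); the identity follows.

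With the identity in hand, (CP2) becomes direct bookkeeping. The integrand $\omega\mapsto\alpha(B,\omega)\mathbf{1}_G(\omega)$ is bounded and $\cI$-measurable, so expanding $Q=\tfrac12\bar{P}+\sum_{n\ge 0}2^{-n-2}(P\circ T^{-n})$, using invariance of the integrand to reduce each $\int\,\cdot\,d(P\circ T^{-n})$ to $\int\,\cdot\,dP$, and then invoking the key identity, collapses $\int_G\alpha(B,\cdot)\,dQ$ to $\int_G\alpha(B,\cdot)\,d\bar{P}$. Expanding $\alpha$ inside this $\bar{P}$-integral and applying (CP2) for $\delta_{\bar{P}}$ to the first summand, and the key identity followed by (CP2) for $\delta_P$ (at $T^{-n}B$) to each term of the series, produces $\tfrac12\bar{P}(B\cap G)+\sum_{n\ge 0}2^{-n-2}P(T^{-n}B\cap G)$; using $T^{-n}G=G$ for $G\in\cI$, this equals $Q(B\cap G)$, as required. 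The main obstacle is spotting and proving the key identity that $P$ and $\bar{P}$ agree on bounded $\cI$-measurable functions; everything else is mechanical.
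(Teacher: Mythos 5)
Your proof is correct and follows essentially the same route as the paper's: verify (RCP) and (CP1) directly, and reduce (CP2) to the fact that integrals of bounded $\cI$-measurable (hence $T$-invariant) functions against $P$, $P\circ T^{-n}$, $P_n$, $\bar P$ and $Q$ all coincide --- this is the paper's Lemma~\ref{l.invfunc}, which it cites from Gray, whereas you prove the bounded case self-containedly from the strong convergence $P_n\to\bar P$, which suffices here. The only organizational difference is that you apply (CP2) of $\delta_P$ directly at the set $T^{-n}B$ and use $T^{-n}G=G$ afterwards, while the paper packages the same computation as a separate lemma (Lemma~\ref{l.tversion}) identifying $\delta_P(T^{-n}B,\cdot)$ as a version of $\delta_{P\circ T^{-n}}(B,\cdot)$.
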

  
\begin{remark}
  In order to achieve that $Q_{\omega}$ dominates all of
  the $P_{\omega}\circ T^{-n}$ and $\bar{P}_{\omega}$ one could have
  defined $Q_{\omega}$ directly via (\ref{eq.delq}).
  However, the observation that $Q_{\omega}$
  is induced by the regular conditional probability of $Q$ given $\cI$
  is crucial for step 3.  
\end{remark}

\subsection{Step 2}

Construction of positive contractions $U_{\omega}$ on
$L_1(Q_{\omega})$ is achieved by, mutatis mutandis, reiterating the
arguments accompanying the construction of $U$ in the proof of
lemma~\ref{l.amsconv2}. In more detail, we replace
$P,P_n,\bar{P},Q,f_n,\bar{f}$ there by
$P_{\omega},\bar{P}_{\omega},P_{n,\omega},Q_{\omega},f_{n,\omega},\bar{f}_{\omega}$
(we recall
(\ref{eq.pomega}),(\ref{eq.barE}),(\ref{eq.pnomega}),(\ref{eq.Qomega}),(\ref{eq.omegadensities})
for the latter definitions) here. Note that choosing the version of
$Q_{\omega}$ according to lemma~\ref{l.qversion} ensures that
$U_{\omega}$ indeed maps $L_1(Q_{\omega})$ onto $L_1(Q_{\omega})$.\par
(\ref{eq.uomega}) and (\ref{eq.aomega}) then are a direct consequence
of the definition of $U_{\omega}$. Finally, application of Krengel's
theorem~\ref{t.krengel} to the positive contraction $U_{\omega}$ on
$L_1(Q_{\omega})$ yields a $U_{\omega}$-invariant $f^*_{\omega}$ to
which the $f_{n,\omega}$ converge stochastically.  Moreover, again by
Krengel's theorem,
\begin{equation}
\label{eq.fstar}
f^*_{\omega} = \liminf_{n\to\infty}f_{n,\omega}\quad\text{ in }L_1(Q_{\omega}).
\end{equation}

\subsection{Step 3}

We have to show that
\begin{equation*}
f^*_{\omega} = \bar{f}_{\omega}\quad\text{ in }L_1(Q_{\omega})
\end{equation*}
for $\omega$ in an invariant set $E\subset \bar{E}$ with $Q(E)=1$.  In
a first step, the following lemma will provide as with a useful
invariant $E^*$ where $E\subset E^*\subset\bar{E}$ and $Q(E^*)=1$.  We
further recall the definitions of $f_n$ and $\bar{f}$ as the densities
of $P_n$ and $\bar{P}$ w.r.t. $Q$ (see (\ref{eq.fn})).  Without loss
of generality, we choose representatives that are everywhere
nonnegative. Due to lemma \ref{l.amsconv2},
\begin{equation}\label{eq.fbarae2}
\liminf_{n\to\infty}f_n = \bar{f}\quad\text{ in }L_1(Q).
\end{equation}

\begin{lemma}\label{l.liminfae}
There is an invariant set $E^*$ with $P(E^*) = Q(E^*) = 1$ such that,
for $\omega\in E^*$,
\begin{equation}
\label{eq.fstarae}
\liminf_{n\to\infty}f_n =\liminf_{n\to\infty}f_{n,\omega}\quad\text{ in }L_1(Q_{\omega})
\end{equation}
and
\begin{equation}
\label{eq.fbarae}
\bar{f} = \bar{f}_{\omega}\quad\text{ in }L_1(Q_{\omega}).
\end{equation} 
\end{lemma}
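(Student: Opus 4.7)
The plan is to show that, on an invariant $E^*$ of full measure, the chosen representatives of $f_n$ and $\bar f$, viewed as functions on $\Omega$, are simultaneously valid $Q_\omega$-densities for $P_{n,\omega}$ and $\bar P_\omega$; once this is established, (\ref{eq.fbarae}) is immediate and (\ref{eq.fstarae}) follows by taking pointwise $\liminf$ outside a further countable union of $Q_\omega$-nullsets.

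The conceptual hinge is the identity
\begin{equation*}
Q|_\cI=P|_\cI=\bar P|_\cI.
\end{equation*}
To see this, $T^{-t}G=G$ for $G\in\cI$ forces $\bar P(G)=\lim_n\tfrac1n\sum_t P(T^{-t}G)=P(G)$, and then the explicit form (\ref{eq.Q}) of $Q$ collapses to $Q(G)=\tfrac12(P(G)+P(G))=P(G)$. With this in hand I would fix $B\in\cB$, $G\in\cI$ and argue in two ways. On one hand, shift-invariance of $G$ combined with (CP2) gives $P_n(B\cap G)=\int_G P_{n,\omega}(B)\,dP(\omega)$; and since $\omega\mapsto P_{n,\omega}(B)$ is $\cI$-measurable, the observation above rewrites this as $\int_G P_{n,\omega}(B)\,dQ(\omega)$. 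On the other hand, disintegrating $Q$ by $Q_\omega$ expresses the same quantity as $\int_G\bigl(\int_B f_n\,dQ_\omega\bigr)\,dQ(\omega)$. Both integrands are $\cI$-measurable and the identity holds for every $G\in\cI$, so $P_{n,\omega}(B)=\int_B f_n\,dQ_\omega$ for $Q$-a.e.\ $\omega$, and the identical calculation with $\bar P$ in place of $P_n$ gives $\bar P_\omega(B)=\int_B\bar f\,dQ_\omega$ $Q$-a.e.

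To upgrade these ``a.e.\ for each $B$'' statements into a single ``a.e.\ for every $B$'' statement, I would exploit standardness: fix a countable generating field $\cF$ of $\cB$ and let $E^*$ be the complement of the union of the nullsets indexed by $(n,F)\in\N\times\cF$ (plus a single one for the $\bar f$-identity). On $E^*$, the probability measures $B\mapsto P_{n,\omega}(B)$ and $B\mapsto\int_B f_n\,dQ_\omega$ agree on $\cF$, hence on $\cB$, so $f_n$ is a representative of $f_{n,\omega}$ in $L_1(Q_\omega)$; likewise $\bar f$ represents $\bar f_\omega$. The defining condition on $\omega$ involves only the $\cI$-measurable quantities $Q_\omega,P_{n,\omega},\bar P_\omega$ and the fixed measurable functions $f_n,\bar f$, so $E^*$ can be taken in $\cI$; and $Q(E^*)=1$ together with $P\ll Q$ yields $P(E^*)=1$.

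The main obstacle is purely bookkeeping: collecting countably many $B$-dependent nullsets into a single set that is simultaneously invariant, of full $P$- and $Q$-measure, and admits the ``for every $B$'' upgrade. The two structural inputs that make this feasible are the standard-space hypothesis, which provides the countable generating field, and the identity $Q|_\cI=P|_\cI$, which suppresses the Radon--Nikodym factor $dP/dQ$ that would otherwise contaminate the identification of $f_n$ with $f_{n,\omega}$.
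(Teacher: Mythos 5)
Your proposal is correct and follows essentially the same route as the paper: the identity $Q|_{\cI}=P|_{\cI}=P_n|_{\cI}=\bar P|_{\cI}$ is the paper's Lemma~\ref{l.invfunc}, your two evaluations of $P_n(B\cap G)$ reproduce the paper's verification (Lemma~\ref{l.version}) that $\gamma_n(B,\omega)=\int_B f_n\,dQ_\omega$ and $\beta_n(B,\omega)=P_{n,\omega}(B)$ are both versions of the conditional probability of $P_n$ given $\cI$, and the countable-generating-field upgrade plus intersection over $n$ is exactly the paper's use of the Lemma~\ref{l.standreg} argument and the definition of $E^*$ as $E_\infty\cap\bigcap_n E_n$. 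The only cosmetic difference is that you phrase the per-$B$ identification as ``two $\cI$-measurable functions with equal integrals over every $G\in\cI$'' rather than as uniqueness of conditional probabilities, and you get $P(E^*)=1$ from $P\ll Q$ rather than from invariance of $E^*$.
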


\begin{proof} We have deferred the merely technical proof to
appendix~\ref{app.proof2}.  
\qed\end{proof} 

\noindent We compute
\begin{equation*}
\begin{split}
\int_{E^*}(\int |f^*_{\omega}-\bar{f}_{\omega}|\,dQ_{\omega})\,dQ 
&\stackrel{(\ref{eq.fstar}),(\ref{eq.fstarae}),(\ref{eq.fbarae})}{=} 
\int_{E^*}(\int |\liminf_{n\to\infty}f_n
-\bar{f}|\,dQ_{\omega})\,dQ\\
&\stackrel{(*)}{=} 
\int_{E^*} |\liminf_{n\to\infty}f_n-\bar{f}|\,dQ 
\stackrel{(\ref{eq.fbarae2})}{=} 0
\end{split}
\end{equation*}
\begin{sloppypar}
\noindent where $(*)$ follows from the defining properties of the conditional
expectation $E(|\liminf_{n\to\infty}f_n-\bar{f}|\;|\;\cI)$ in
combination with theorem~\ref{t.condexp}.  According to the last
computation, we find a set $E\subset E^*$ with $Q(E) = 1$ such that
\end{sloppypar}
\begin{equation*}
\omega\in E\quad\Longrightarrow\quad\int |f^*_{\omega}-\bar{f}_{\omega}|\,dQ_{\omega} = 0. 
\end{equation*}
The invariance of the regular conditional probabilities (see
(\ref{eq.condinv})) involved in the definitions of
$f^*_{\omega},\bar{f}_{\omega}$ implies
\begin{equation*}
\int |f^*_{\omega}-\bar{f}_{\omega}|\,dQ_{\omega} = 0\quad\Longleftrightarrow\quad
\int |f^*_{T\omega}-\bar{f}_{T\omega}|\,dQ_{T\omega} = 0.
\end{equation*}
This translates to that $E$ is invariant such that $E$ meets the requirements
of theorem~\ref{t.amsdecomp}.
\qed

\section{Discussion}\label{sec.discussion}
We have demonstrated how to decompose AMS random sources, which
encompass a large variety of sources of practical interest, into
ergodic components. The result comes in the tradition of the ergodic
decomposition of stationary sources. As outlined in the introduction,
this substantially added to source coding theory by facilitating the
generalization of a variety of prominent theorems to arbitrary, not
necessarily ergodic, stationary sources.\par Our result can be
expected to yield similar contributions to the theory of AMS
sources. An immediate clue is that the theorems developed in
\cite{Gray84} for two-sided AMS sources are now valid for arbitrary
AMS sources by replacing theorem 2.6 there by
theorem~\ref{t.amsdecomp} here.\par Moreover, a couple of relevant
quantities in information theory (e.g.~entropy rate) are affine
functionals that are upper semicontinuous w.r.t.~the space of
stationary random sources, equipped with the weak topology.  Jacobs'
theory of such functionals (\cite{Jacobs63}, see also \cite{Effros94},
th.~4) immediately builds on the ergodic decomposition of stationary
sources. This theory should now be extendable to AMS sources.\par We
finally would like to mention that a certain class of source coding
theorems for AMS sources were obtained by partially circumventing the
lack of an ergodic decomposition. Schematically, this was done by a
reduction from AMS sources to their stationary means and subsequent
application of the ergodic decomposition for stationary sources in
order to further reduce to ergodic sources. In these cases, our
contribution would only be to simplify the theorems' statements and
thus a merely esthetical one. However, in the remaining cases where
the reduction from asymptotic mean stationarity to stationarity is not
applicable, our result will be essential. The full exploration of
related consequences seems to be a worthwhile undertaking.

\section{Acknowledgments}
The author would like to thank the Pacific Institute for the
Mathematical Sciences for funding.

\begin{appendix}

\section{Proof of lemma~\ref{l.qversion}}
\label{app.lemma}

In the following, according to the assumptions of
theorem~\ref{t.amsdecomp}, $P$ is a measure on a standard space
$(\O,\cB)$ that is AMS relative to the measurable $T:\O\to\O$. 
We further recall the notations of section~\ref{sec.results} 
as well as equations (\ref{eq.pn}) and (\ref{eq.Q}) for the necessary
definitions.

\begin{lemma}\label{l.invfunc}
Let $g:\Omega\to\R$ be a
$T$-invariant (that is, $g(\omega) = g(T\omega)$ for all
$\omega\in\O$), measurable function. Then it holds that
\begin{equation}\label{eq.invfunc}
\int g\,dP = \int g\,d(P\circ T^{-n}) = \int g\,dP_n = \int g\,d\bar{P} = \int g\,dQ.
\end{equation}
In particular, all of the integrals exist if one of the integrals exists.
\end{lemma}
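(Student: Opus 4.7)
The plan is to first establish that the five measures $P$, $P\circ T^{-n}$, $P_n$, $\bar P$, and $Q$ all coincide on the invariant $\sigma$-algebra $\cI$, and then reduce the integral identities to this fact by exploiting the $T$-invariance of $g$ through its level sets.

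For the first step I would fix $I \in \cI$ and observe that $T^{-n}I = I$ for every $n \ge 0$, so $P(T^{-n}I) = P(I)$, and averaging yields $P_n(I) = P(I)$. Passing to the limit in the AMS definition (\ref{eq.ams}) gives $\bar{P}(I) = P(I)$. Plugging into the definition (\ref{eq.Q}) of $Q$ and using $\sum_{n \ge 0} 2^{-n-1} = 1$ yields $Q(I) = \tfrac12(P(I) + P(I)) = P(I)$. Thus all five measures agree on $\cI$.

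For the second step I would exploit that the $T$-invariance of $g$ forces each level set $\{g>c\}$ to lie in $\cI$. Hence for any nonnegative $T$-invariant measurable $g$, the layer-cake identity
\begin{equation*}
\int g\,d\mu \;=\; \int_0^\infty \mu(\{g>c\})\,dc
\end{equation*}
delivers the same value for each of the five measures $\mu \in \{P,\, P\circ T^{-n},\, P_n,\, \bar P,\, Q\}$ by Step~1. In particular all five integrals are simultaneously finite or infinite. For a signed $T$-invariant $g \in L_1$ of one of these measures, I would apply the preceding to $g^+$ and $g^-$ (both nonnegative and $T$-invariant, with level sets $\{g^{\pm}>c\} \in \cI$), which yields both the simultaneous existence assertion and the chain of equalities (\ref{eq.invfunc}) by subtracting.

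There is no real obstacle here: Step~1 is a direct unfolding of the definitions, and Step~2 is a standard layer-cake argument. The only minor point worth flagging is that one could alternatively handle the three pushforward cases $P\circ T^{-n}$ and $P_n$ via the change-of-variables identity $\int g\,d(P\circ T^{-n}) = \int g\circ T^n\,dP = \int g\,dP$, which bypasses layer cake there; but the layer-cake approach treats $\bar P$ and $Q$ uniformly in the same breath, so I would prefer it.
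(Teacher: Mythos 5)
Your proof is correct and takes essentially the same route as the paper: the paper simply observes that $Q$, the $P\circ T^{-n}$ and the $P_n$ are all AMS with common stationary mean $\bar P$ and cites Gray for the fact that $\int g\,dP=\int g\,d\bar P$ for invariant $g$, which is precisely your agreement-on-$\cI$ plus layer-cake argument written out in full. Your version has the minor merit of being self-contained where the paper defers the details to a reference.
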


\begin{proof} Note that $Q$ and all of the $P\circ T^{-n}$ and $P_n$, like $P$,
are AMS with stationary mean $\bar{P}$, which is an obvious
consequence of their definitions. Therefore, the claim of the lemma
follows from the, intuitively obvious, observation that $\int g\,dP =
\int g\,d\bar{P}$ for invariant $g$ and general AMS $P$ with
stationary mean $\bar{P}$. See \cite{Gray01} for
details.\qed\end{proof}
\medbreak

\begin{lemma}\label{l.tversion}
The functions 
\begin{equation*}
\zeta_n(B,\omega) := \delta_P(T^{-n}B,\omega) = P_{\omega}(T^{-n}B)
\end{equation*}
are versions of the regular conditional probabilities $\delta_{P\circ
  T^{-n}}$ of the $P\circ T^{-n}$ given $\cI$.
\end{lemma}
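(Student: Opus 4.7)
The plan is to verify the three defining clauses of a regular conditional probability from Definition~\ref{d.regprob}, namely (RCP), (CP1), and (CP2), for $\zeta_n$ viewed as a candidate version of the regular conditional probability of $P\circ T^{-n}$ given $\cI$.

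For (RCP), I would observe that for fixed $\omega$ the map $B\mapsto \zeta_n(B,\omega)=P_\omega(T^{-n}B)$ is simply the pushforward of the probability measure $P_\omega=\delta_P(\cdot,\omega)$ under the measurable map $T^n$, and hence is a probability measure on $\cB$. For (CP1), I would note that by the very definition of $\delta_P$ as a regular conditional probability given $\cI$, the function $\omega\mapsto\delta_P(B',\omega)$ is $\cI$-measurable for every $B'\in\cB$; specializing to $B'=T^{-n}B\in\cB$ gives the $\cI$-measurability of $\zeta_n(B,\cdot)$.

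The meat of the argument is (CP2). Given $I\in\cI$ and $B\in\cB$, I would first use the $T$-invariance of $I$ to rewrite
\[
(P\circ T^{-n})(B\cap I)=P(T^{-n}B\cap T^{-n}I)=P(T^{-n}B\cap I),
\]
and then apply (CP2) for $\delta_P$ to obtain
\[
P(T^{-n}B\cap I)=\int_I \delta_P(T^{-n}B,\omega)\,dP(\omega)=\int_I\zeta_n(B,\omega)\,dP(\omega).
\]
It remains to replace $dP$ by $d(P\circ T^{-n})$ on the right-hand side. Since $\zeta_n(B,\cdot)$ is $\cI$-measurable by (CP1) and $I\in\cI$, the integrand $\mathbf{1}_I\,\zeta_n(B,\cdot)$ is $\cI$-measurable and therefore $T$-invariant as a function. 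Lemma~\ref{l.invfunc} then gives
\[
\int \mathbf{1}_I\,\zeta_n(B,\omega)\,dP(\omega)=\int \mathbf{1}_I\,\zeta_n(B,\omega)\,d(P\circ T^{-n})(\omega),
\]
which combines with the previous display to yield (CP2).

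The only mildly delicate point is the last step, where one has to know both that $\cI$-measurability forces pointwise $T$-invariance and that $P\circ T^{-n}$ shares the stationary mean $\bar P$ with $P$ (so that Lemma~\ref{l.invfunc} applies). Both are standard, so I do not anticipate a genuine obstacle; the proof is essentially a bookkeeping exercise chaining the invariance of $\cI$, the defining properties of $\delta_P$, and the integral identity of Lemma~\ref{l.invfunc}.
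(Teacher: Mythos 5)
Your proposal is correct and follows essentially the same route as the paper: (RCP) and (CP1) are inherited directly from $\delta_P$, and (CP2) is obtained by combining the invariance $T^{-n}I=I$, the defining property (CP2) of $\delta_P$, and the change of measure $dP\leftrightarrow d(P\circ T^{-n})$ for invariant integrands via Lemma~\ref{l.invfunc} (the paper writes the same chain of equalities in the opposite order). The ``delicate point'' you flag is exactly the paper's equation (\ref{eq.condinv}) together with Lemma~\ref{l.invfunc}, so there is no gap.
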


\begin{proof} The functions $\zeta_n(.,\omega)$ are probability
measures for fixed $\omega\in\Omega$ (this is $(RCP)$ of definition
\ref{d.regprob}) as the $P_{\omega}$ are, by the definition of
$\delta_P$. Again by the definition of $\delta_P$, $\zeta_n(B,.)$ is
also $\cI$-measurable in $\omega$ for fixed $B\in\cB$.  which is
$(CP1)$ of definition \ref{d.regprob}.  For $I\in\cI$ and $B\in\cB$ we
compute
\begin{equation*}
\begin{split}
\int_I\delta_P(T^{-n}B,\omega)\,d(P\circ T^{-n})(\omega) 
&\stackrel{(\ref{eq.condinv}),(\ref{eq.invfunc})}{=} \int_I\delta_P(T^{-n}B,\omega)\,dP(\omega)\\
&= P(I\cap T^{-n}B) \stackrel{T^{-n}I = I}{=} P(T^{-n}(I\cap B))\\ 
&= \int_I\delta_{P\circ T^{-n}}(B,\omega)\,d(P\circ T^{-n})(\omega) 
\end{split}
\end{equation*}
where the first equation follows from the invariance of
the integrands and lemma \ref{l.invfunc}. We have thus shown
$(CP2)$ of definition \ref{d.regprob}.\qed\end{proof}
\medbreak
We recall that, for lemma~\ref{l.qversion}, we have to show that
\begin{equation*}
\alpha(B,\omega) = \frac12(\bar{P}_{\omega}(B)
+ \sum_{n\ge 0}2^{-n-1}P_{\omega}(T^{-n}B))
\end{equation*}
is a version of the regular conditional probability $\delta_Q$.  Note
first that $\bar{P}_{\omega}$, according to our proof strategy
outlined in subsection~\ref{ssec.strategy}, was defined as
$\delta_{\bar{P}}(.,\omega)$ where $\delta_{\bar{P}}$ is the regular
conditional probability of the stationary mean $\bar{P}$.
Furthermore, as a consequence of lemma~\ref{l.tversion}, we can
identify the $P_{\omega}\circ T^{-n}$ with $\delta_{P\circ
  T^{-n}}(.,\omega)$ and write
\begin{equation}
\label{eq.deltastar}
\alpha(B,\omega) = 
\frac12(\delta_{\bar{P}}(B,\omega) 
+ \sum_{n\ge 0}2^{-n-1}\delta_{P\circ T^{-n}}(B,\omega)).
\end{equation}
We will then exploit the defining properties of the $\delta$s to
finally show that $\alpha$ is a version of $\delta_Q$.\\

\noindent {\it Proof of lemma~\ref{l.qversion}.$\,$} We have to
check properties $(RCP), (CP1)$ and $(CP2)$ of definition
~\ref{d.regprob}.\\

$(RCP)\;$: That $\alpha(.,\omega)$ is a probability measure for fixed
$\omega$ follows from an argumentation which is completely analogous
to that at the beginning of section~\ref{sec.amsconv}, surrounding
equations (\ref{eq.Qa}) and (\ref{eq.Q}).\\

$(CP1)\;$: As all of the $\delta$'s involved in (\ref{eq.deltastar})
are invariant in $\omega$ (see (\ref{eq.condinv})), we know that
$\alpha(B,.)$ is measurable w.~r.~t.~$\cI$ for any $B\in\cB$ which is
$(CP1)$ of definition ~\ref{d.regprob}.\\

$(CP2)\;$: Fix $B\in\cB$ and consider the functions
\begin{equation*}
g_n(\omega):=\frac12(\delta_{\bar{P}}(B,\omega)+\sum_{k=0}^n2^{-k-1}\delta_{P\circ T^{-k}}(B,\omega)).
\end{equation*}
This is an increasing sequence of non-negative measurements which
converges everywhere to the values $\alpha(B,\omega)$. Because of
(\ref{eq.condinv}) the summands of $g_n$ are invariant. As all of the
summands are also integrable with respect to some $P\circ T^{-k}$ or
$\bar{P}$ they are also integrable with respect to $Q$, due to
lemma~\ref{l.invfunc}. Therefore, also the $g_n$ are integrable with
respect to $Q$. The monotone convergence theorem of Beppo Levi
(e.g. \cite{Halmos}) reveals that also $\alpha(B,.)$ is and further,
for $I\in\cI$ and $B\in\cB$:
\begin{equation*}
\begin{split}
\int_I \alpha(B,\omega)\,dQ(\omega) 
&= \int_I \lim_{n\to\infty}\frac12(\delta_{\bar{P}}(B,\omega)
+\sum_{k=0}^n2^{-k-1}\delta_{P\circ T^{-k}}(B,\omega))\,dQ(\omega)\\
&\stackrel{(a)}{=}\lim_{n\to\infty}\int_I \frac12(\delta_{\bar{P}}(B,\omega)
+\sum_{k=0}^n2^{-k-1}\delta_{P\circ T^{-k}}(B,\omega))\,dQ(\omega)\\
&\stackrel{(b)}{=}\lim_{n\to\infty}
\frac12(\int_I\delta_{\bar{P}}(B,\omega)\,d\bar{P}(\omega)\\
&\qquad\qquad\;\; +\sum_{k=0}^n2^{-k-1}\int_I\delta_{P\circ T^{-k}}(B,\omega)\,d(P\circ T^{-k})(\omega)) \\
&\stackrel{(c)}{=} \lim_{n\to\infty}\frac12(\bar{P}(I\cap B) + \sum_{k=0}^n2^{-k-1}P(T^{-k}(I\cap B))\\
&= \frac12(\bar{P}(I\cap B) + \sum_{n\ge 0}2^{-n-1}P(T^{-n}(I\cap B)))\\
&= Q(I\cap B)
\end{split}
\end{equation*}
where $(a)$ follows from Beppo Levi's theorem, $(b)$ follows from the
invariance of the $\delta$s and subsequent application of
lemma~\ref{l.invfunc} and $(c)$ is just the defining property $(CP2)$
of the conditional probabilities $\delta$
(definition~\ref{d.regprob}). We thus have shown property $(CP2)$ for
$\alpha$.\qed
\medbreak

\section{Proof of Lemma~\ref{l.liminfae}}
\label{app.proof2}

According to the assumptions of theorem~\ref{t.amsdecomp}, $P$ is a
measure on a standard space $(\O,\cB)$ that is AMS relative to the
measurable $T:\O\to\O$.  We further recall the notations of
section~\ref{sec.results} as well as equations (\ref{eq.pn}),
(\ref{eq.Q}), (\ref{eq.fn}), (\ref{eq.pomega}), (\ref{eq.barE}),
(\ref{eq.pnomega}), (\ref{eq.omegadensities}), (\ref{eq.Qomega}) and
the surrounding texts for the necessary definitions. We further remind
that, without loss of generality, we had chosen representatives of the
$f_n$ and $\bar{f}$ that are everywhere nonnegative. The following
lemma will deliver the technical key to lemma~\ref{l.liminfae}.

\begin{lemma}\label{l.version}
For each $1\le n\in\N$ there is an invariant $E_n\in\cI\subset\cB$ with $P_n(E_n) = Q(E_n) = 1$
such that 
\begin{equation*}
\omega\in E_n\quad\Longrightarrow\quad f_{n,\omega} = f_n\quad\text{ in }L_1(Q_{\omega}).
\end{equation*}
There is also an invariant $E_{\infty}$ with $\bar{P}(E_{\infty}) = Q(E_{\infty}) = 1$
such that
\begin{equation*}
\omega\in E_{\infty}\quad\Longrightarrow\quad \bar{f}_{\omega} = \bar{f}\quad\text{ in }L_1(Q_{\omega}).
\end{equation*}
\end{lemma}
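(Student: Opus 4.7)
The plan is to first identify $\omega\mapsto P_{n,\omega}$ as a version of the regular conditional probability $\delta_{P_n}$ of $P_n$ given $\cI$, then, for each fixed $B\in\cB$, to use uniqueness of conditional expectations to obtain the identity $\delta_{P_n}(B,\omega)=\int_B f_n\,dQ_\omega$ on an invariant set of full $Q$-measure, and finally to invoke standardness of $(\O,\cB)$ to promote this to a single exceptional null set that works for all $B$ simultaneously. The construction of $E_\infty$ will be completely parallel, with $\bar P_\omega=\delta_{\bar P}(.,\omega)$ in place of $P_{n,\omega}$ and $\bar f$ in place of $f_n$.

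For the identification, lemma~\ref{l.tversion} already shows that $\omega\mapsto\delta_P(T^{-t}B,\omega)$ is a version of $\delta_{P\circ T^{-t}}$; averaging over $0\le t\le n-1$ makes $(RCP)$ and $(CP1)$ of definition~\ref{d.regprob} immediate. For $(CP2)$, given $I\in\cI$ and $B\in\cB$, one integrates the invariant function $\mathbf{1}_I\,\delta_P(T^{-t}B,.)$ against $P_n$, swaps to $P\circ T^{-t}$ via lemma~\ref{l.invfunc}, and applies lemma~\ref{l.tversion} together with $(CP2)$ for $\delta_{P\circ T^{-t}}$ to obtain $(P\circ T^{-t})(I\cap B)$; averaging yields $P_n(I\cap B)$, so $P_{n,\omega}$ is a version of $\delta_{P_n}$.

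Next, fix $B\in\cB$. By corollary~\ref{c.condexp}, $\omega\mapsto\int_B f_n\,dQ_\omega$ is $\cI$-measurable and, for every $G\in\cI$,
\begin{equation*}
\int_G\!\int_B f_n\,dQ_\omega\,dQ\;=\;\int_{G\cap B}f_n\,dQ\;=\;P_n(G\cap B)\;=\;\int_G \delta_{P_n}(B,\omega)\,dP_n\;=\;\int_G \delta_{P_n}(B,\omega)\,dQ,
\end{equation*}
the last equality using invariance of $\delta_{P_n}(B,.)$ together with lemma~\ref{l.invfunc} (both $P_n$ and $Q$ are AMS with stationary mean $\bar P$). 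Since both integrands are $\cI$-measurable and their integrals against every $G\in\cI$ coincide, they agree $Q$-a.e.; the coincidence set $A_B$ is automatically invariant, being the zero set of the difference of two $\cI$-measurable functions.

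The real obstacle is upgrading from a $B$-dependent null set to a universal one, and this is exactly where standardness enters. Pick a countable field $\cF$ generating $\cB$ and set $E_n:=\bigcap_{F\in\cF}A_F$; then $E_n$ is invariant, $Q(E_n)=1$, and applying lemma~\ref{l.invfunc} to $\mathbf{1}_{E_n}$ also gives $P_n(E_n)=1$. For $\omega\in E_n$ the two finite measures $B\mapsto P_{n,\omega}(B)$ and $B\mapsto\int_B f_n\,dQ_\omega$ agree on the generating field $\cF$, hence on all of $\cB$ by uniqueness of measure extension. This identifies $f_n$ as a Radon--Nikodym derivative of $P_{n,\omega}$ with respect to $Q_\omega$, so $f_n=f_{n,\omega}$ in $L_1(Q_\omega)$. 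The analogous argument --- with $\bar P_\omega$ and $\bar f$ in place of $P_{n,\omega}$ and $f_n$, using that $\bar P$ and $Q$ share the stationary mean $\bar P$ --- produces the invariant set $E_\infty$ with the desired property.
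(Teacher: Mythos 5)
Your proposal is correct and follows essentially the same route as the paper: identify $P_{n,\omega}$ (via Lemma~\ref{l.tversion}) with the regular conditional probability $\delta_{P_n}(\cdot,\omega)$, show that $B\mapsto\int_Bf_n\,dQ_\omega$ is another version of the conditional probability of $P_n$ given $\cI$ using Corollary~\ref{c.condexp} and Lemma~\ref{l.invfunc}, and then pass from a per-event null set to a universal one via the countable generating field as in Lemma~\ref{l.standreg}, with invariance and $Q(E_n)=P_n(E_n)=1$ obtained exactly as in the paper. The only differences are cosmetic (order of establishing $Q(E_n)=1$ versus $P_n(E_n)=1$, and spelling out the averaging step for $\delta_{P_n}$ that the paper asserts tersely).
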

\medskip

Loosely speaking, the lemma reveals that the $f_n$ and the
$f_{n,\omega}$ as well as $\bar{f}$ and $\bar{f}_{\omega}$ agree
$Q_{\omega}$-a.e, for $Q$-almost all $\omega\in\O$.  This
means that, for $Q$-almost all $\omega$, they are equal on the parts
of $\Omega$ considered relevant by the measures $Q_{\omega}$.

\begin{proof}
Consider the functions
\begin{equation*}
\beta_n(B,\omega) := \int_Bf_{n,\omega}\,dQ_{\omega}\quad\text{ and }\quad
\gamma_n(B,\omega) := \int_Bf_n\,dQ_{\omega}
\end{equation*}
By the definition of a density, 
\begin{equation*}
\delta_{P_n}(B,\omega) = \int_Bf_{n,\omega}\,dQ_{\omega}.
\end{equation*}
Hence $\beta_n(B,\omega)$ is just the regular conditional
probability of $P_n$ given $\cI$. We now show that $\gamma_n$ is a
version of the conditional probability of $P_n$ given $\cI$ (but not
necessarily a regular one). Note first that the $\gamma_n(B,.)$ are
$\cI$-measurable as, according to (\ref{eq.version}), we have that
$\gamma_n(B,\omega)$ agrees with the conditional expection
$E_Q(\mathbf{1}_Bf_n|\cI)(\omega)$, which, by definition, is
$\cI$-measurable. Second, we observe that, for $I\in\cI$ and
$B\in\cB$, as $\gamma_n$ is invariant in $\omega$ $(*)$,
\begin{equation*}
\begin{split}
\int_I\gamma_n(B,\omega)\,dP_n &\stackrel{(*),(\ref{eq.invfunc})}{=} \int_I \gamma_n(B,\omega)\,dQ\\
 &= \int_I (\int_Bf_n\,dQ_{\omega})\,dQ\\
&= \int_I (\int\mathbf{1}_Bf_n\,dQ_{\omega})\,dQ\\
&\stackrel{(\ref{eq.condexp})}{=} \int_I \mathbf{1}_Bf_n\,dQ = \int_{I\cap B}f_n\,dQ\\
&= P_n(I\cap B),
\end{split}
\end{equation*}
which shows the required property $(CP2)$ of
definition~\ref{d.regprob}. Hence the $\gamma_n$'s are versions of the
conditional probabilities of the $P_n$'s given $\cI$.
\medbreak

Note that the $\gamma_n(.,\omega)$ are measures because the $f_n$ had
been chosen nonnegative everywhere.
If we follow the line of argumentation of lemma~\ref{l.standreg}
we find a set $E_n$ of $P_n$-measure one such that the
measures $\beta_n(.,\omega)$ and $\gamma_n(.,\omega)$ agree for
$\omega\in E_n$. Because of the invariance of $\beta_n,\gamma_n$ the
set $E_n$ is invariant. Hence (lemma~\ref{l.invfunc}) also $Q(E_n) = 1$.
Resuming we have
\begin{equation*}
\omega\in E_n\quad\Longrightarrow\quad 
\forall B\in\cB:\;\int_Bf_n\,dQ_{\omega} = \int_Bf_{n,\omega}\,dQ_{\omega}.
\end{equation*}
As two functions agree almost everywhere if their integrals conincide
over arbitrary events, we are done with the assertion of the lemma for 
the $f_n$.
\medbreak

We find an invariant set $E_{\infty}$ with $\bar{P}(E_{\infty}) = Q(E_{\infty}) = 1$ 
such that
\begin{equation*}
\bar{f}_{\omega}=\bar{f}\quad\text{ in }L_1(Q_{\omega})
\end{equation*}
for $\omega\in E_{\infty}$
by a
completely analogous argumentation .\qed\end{proof}
\medbreak

\noindent {\it Proof of lemma~\ref{l.liminfae}.$\,$} Define
\begin{equation}
\label{eq.estar}
E^*:=E_{\infty} \cap (\bigcap_{n\ge 1}E_n) 
\end{equation}
with $E_{\infty}$ and the
$E_n$ from lemma~\ref{l.version}. $E^*$ is invariant and
$Q(E^*) = 1$ as it applies to all sets on the right hand side of (\ref{eq.estar}). We obtain
\begin{equation*}
\forall n\in\N\;\; f_n = f_{n,\omega}\quad\text{ and }\quad\bar{f} = \bar{f}_{\omega}\quad\text{ in }L_1(Q_{\omega})
\end{equation*}
for $\omega\in E^*$. Therefore also 
\begin{equation*}
\liminf_{n\to\infty}f_n 
=\liminf_{n\to\infty}f_{n,\omega} \quad\text{ in }L_1(Q_{\omega})
\end{equation*}
for $\omega\in E^*$.\qed
\medbreak

\end{appendix}

\end{document}